\newcommand{\roncomment}[1]{}
\newcommand{\commentout}[1]{}
\newcommand{\kbp}{\mathbf{P}}
\newcommand{\decides}{\mathtt{decides}}
\newcommand{\decide}{\mathtt{decide}}
\newcommand{\Agents}{\mathtt{Agt}}
\newcommand{\exchange}{\mathcal{E}} 
\newcommand{\rimp}{\Rightarrow}
\newcommand{\dimp}{\Leftrightarrow}
\newcommand{\Time}{\mathit{time}}
\newcommand{\Nat}{\mathbb{N}}
\newcommand{\Prop}{Prop}
\newcommand{\I}{\mathcal{I}}
\newcommand{\R}{{\cal R}}
\newcommand{\failures}{\mathcal{F}} 
\newcommand{\crashed}{\mathit{crashed}}
\newcommand{\beln}{B^N} % belief, relative to being nonfaulty 
\newcommand{\bel}[1]{B^{#1}} % belief, relative to being nonfaulty 
\newcommand{\eb}[1]{E\hspace{-2pt}B_{#1}} % everyone believes, amongst the nonfaulty
\newcommand{\ek}[1]{E\hspace{-2pt}K_{#1}} % everyone believes, amongst the nonfaulty
\newcommand{\cb}[1]{C\hspace{-2pt}B_{#1}} % everyone believes, amongst the nonfaulty
\newcommand{\ck}[1]{C\hspace{-2pt}K_{#1}} % everyone believes, amongst the nonfaulty
\newcommand{\cbn}{C\hspace{-2pt}B_N} % common belief amongst the nonfaulty
\newcommand{\Values}{V}
\newcommand{\noop}{\mathtt{noop}}
\newcommand{\adv}{\mathit{Adv}}
\newcommand{\N}{\mathcal{N}}
\newcommand{\A}{\mathcal{A}}
\newcommand{\init}{\mathit{init}}
\newcommand{\Msg}{M}
\newcommand{\Crash}{\mathit{Crash}}
\newcommand{\ComCrash}{\mathit{ComCrash}}
\newcommand{\known}{w} 
\newcommand{\new}{\mathit{new}} 
\newcommand{\done}{\mathit{done}} 
\newcommand{\kfaulty}{\mathit{kf}}
\theoremstyle{thmstyleone}%
\newtheorem{thm}{Theorem}
\newtheorem{theorem}[thm]{Theorem}
\newtheorem{proposition}[thm]{Proposition}
\newtheorem{lemma}[thm]{Lemma}
\newtheorem{corollary}[thm]{Corollary}
\theoremstyle{thmstyletwo}%
\theoremstyle{thmstylethree}%
\begin{document}

\title[Optimal Simultaneous Byzantine Agreement, Common Knowledge and Limited Information Exchange]{Optimal Simultaneous Byzantine Agreement, Common Knowledge and Limited Information Exchange}

%%=============================================================%%
%% GivenName	-> \fnm{Joergen W.}
%% Particle	-> \spfx{van der} -> surname prefix
%% FamilyName	-> \sur{Ploeg}
%% Suffix	-> \sfx{IV}
%% \author*[1,2]{\fnm{Joergen W.} \spfx{van der} \sur{Ploeg} 
%%  \sfx{IV}}\email{iauthor@gmail.com}
%%=============================================================%%

\author*[1]{\fnm{Ron} \spfx{van der} \sur{Meyden}}\email{R.VanderMeyden@unsw.edu.au}

\affil*[1]{\orgdiv{School of Computer Science and Engineering}, \orgname{UNSW Sydney}, \orgaddress{\city{Sydney}, \postcode{2052}, \state{NSW}, \country{Australia}}}

\abstract{In order to develop solutions that perform actions as early
  as possible, analysis of distributed algorithms using epistemic
  logic has generally concentrated on ``full information protocols'',
  which may be inefficient with respect to space and computation time.
  The paper reconsiders the epistemic analysis of the problem of
  Simultaneous Byzantine Agreement with respect to weaker, but more
  practical, exchanges of information. The paper first clarifies some
  issues concerning both the specification of this problem and the
  knowledge based program characterizing its solution, concerning the
  distinction between the notions of ``nonfaulty'' and ``not yet
  failed'', on which there are variances in the literature.  It is
  then shown that, when implemented relative to a given failure model
  and an information exchange protocol satisfying certain conditions,
  this knowledge based program yields a protocol that is optimal
  relative to solutions using the same information exchange.
  Conditions are also identified under which this implementation is
  also an optimum, but an example is provided that shows this does not
  hold in general.}

\keywords{Logic of Knowledge, Common Knowledge, Distributed Algorithms, Fault-tolerance, Consensus, Byzantine Agreement,  Optimality}

%%\pacs[JEL Classification]{D8, H51}

\pacs[MSC Classification]{
68W15,  %Distributed algorithms
03B42,  %Logics of knowledge and belief (including belief change)
03B70,  %Logic in computer science
68Q60,  %Specification and verification (program logics, model checking, etc.)
68Q85,  %Models and methods for concurrent and distributed computing (process algebras, bisimulation, transition nets, etc.)
68M15  %Reliability, testing and fault tolerance of networks and computer systems
}

\maketitle

%\ccsdesc[500]{Theory of computation~Modal and temporal logics}
%\ccsdesc[500]{Theory of computation~Logic and verification}
%%\ccsdesc[500]{Theory of computation~Verification by model checking}
%\ccsdesc[500]{Theory of computation~Distributed algorithms}

%\keywords{Logic of Knowledge, Byzantine Agreement, Consensus Protocol, Fault-tolerance}

\section{Introduction} 

The logic of knowledge has been shown to be a helpful formalism for the analysis of fault-tolerant distributed algorithms \cite{DM90,FHMVbook,MT88,HalpernMW01}. 
A particular focus of work in this area has been the problem of Byzantine Agreement \cite{PSL}, which requires a group of agents 
to coordinate on a decision in the face of 
faulty behaviour by some the agents. It has been shown that the precise conditions 
under which a decision can be made by an agent in such a setting 
can be characterized, independently of details of the fault model,  in terms of what the agent knows. That characterization can
then be applied to derive protocols that are \emph{optimal} in the sense that agents decide in each possible run, at the earliest possible time. 
The present paper reconsiders a number of issues in these results, for 
Simultaneous Byzantine Agreement (SBA), which requires agents to
decide simultaneously (in the same round of computation). This version of Byzantine Agreement is relevant for applications 
such as the fair release of stock market information, or the coordination of a set of 
actuators controlling physical equipment such as an airplane or motor vehicle.

In order to coordinate, agents need to exchange information. In the context of Byzantine Agreement protocols, 
this information is about the agents' initial preferences for the joint decision  to be made, and about the 
faults that they have observed while running the protocol. 
Driven by a focus on 
protocols that are theoretically optimal, in the sense of deciding as early as possible, 
the literature has concentrated on ``full information protocols'' \cite{PSL,DM90,MT88}, which maximize 
the information exchanged by having agents store all messages that they receive,
and transmitting their complete state in each round of the protocol. 
 Agents using a full information protocol know everything that they could know 
in any other protocol, enabling them to make their decision at a time no later than 
they would in any other protocol. 

However, full information protocols use agent states that grow exponentially with time. 
While this state can be  reduced with further analysis \cite{MT88},  in some cases, the theoretically 
optimal protocols are relatively inefficient, or even intractable, in space usage or computation time \cite{MT88,Moses09}.
Full information protocols are therefore not necessarily practical, 
and more practical protocols need to make compromises. 

Limiting the information exchanged 
by the protocol is one approach to obtaining a more practical protocol. 
However, one might still ask for a protocol that is optimal, when compared with other protocols exchanging information in the same way. 
Consideration of this issue was begun by Alpturer et al.~\cite{AHM23}, for the Eventual Byzantine Agreement problem in the case of sending omission failures. 
In the present paper, we consider optimality of limited information exchange protocols for 
SBA.
Our particular focus is to understand the relationship between optimality of SBA protocols relative to a limited information exchange 
and a knowledge based program for this problem. We are interested in a general result that covers a range of different 
failure and information exchange models, since this kind of abstraction is one of the advantages obtainable from 
 the knowledge based approach to distributed computing. 

In addressing this question, we 
first revisit a number of issues. 
The characterization of SBA protocols using the logic of knowledge 
has employed a number of distinct notions of common knowledge, 
and there are also differences in the underlying semantic models used to represent the various failure models 
that have been studied. It also emerges that there are subtleties with respect to the notion of optimality 
guaranteed by the knowledge based program once one considers limited information exchange. 

With respect to notions of common knowledge, the original analysis of 
SBA
in the  crash failures model by 
Dwork and Moses \cite{DM90} uses a notion of common knowledge amongst the \emph{nonfailed} (active) agents, 
whereas a later analysis by Moses and Tuttle \cite{MT88} (followed by Fagin et al.~\cite{FHMVbook}), for omissions
failure models, and a  more general notion of agreement protocol, 
uses a notion of common belief amongst the \emph{nonfaulty} agents. 
As generally understood, in the crash failure model, an agent may be nonfailed, but still faulty, because it 
will fail at a later time. There exists some gaps in reasoning in these sources related to these issues. 
We clarify the relationship between these notions, both at the level of specifications and the 
knowledge based program. Specifically, we show that both the SBA specification 
and the common belief condition used in the knowledge based program for SBA may refer to either the
nonfaulty or the nonfailed agents, without change of meaning.

There are also some divergences between the formal modelling of the crash failure model 
in the literature. 
Dwork and Moses \cite{DM90} use a distinguished ``crashed'' state to represent when an agent has crashed, 
whereas a later presentation of their results in Fagin et al.~\cite{FHMVbook} 
models crashed agents as simply failing to send messages from 
some point on (making this model a special case of the sending omissions model). 
This turns out to have an impact on the notion of common knowledge that can be used
in these models. In the interests of generality, we develop a general 
modelling of failures that encompasses both of these models of crash failures. 
We are then able to establish an  equivalence between the different notions of 
common knowledge that have been used in the  crash failures case. 

Our general failure model can also represent  sending and receiving omissions failures, 
but is more general than others that have been used in the literature on 
the application of epistemic logic to distributed computing, in that it can also 
represent failures in which messages and agent local states can be corrupted. 

Using the resulting unified understanding of the literature, we then turn to the main
contributions of the paper, in which we consider the knowledge based program $\kbp$ that, when implemented 
with respect to the full information exchange, yields an SBA protocol 
that is an \emph{optimum} with respect to all possible SBA protocols (for a fixed failure model), 
in the sense that no other SBA protocol can decide earlier in any run. 
We study the senses in which we obtain optimality of implementations of $\kbp$ 
with respect to a limited information exchanges. 

We show that, if we implement $\kbp$ with respect to a given information exchange protocol, 
we may also obtain an implementation that is an optimum relative to protocols using that information exchange. 
This result requires several assumptions on the information exchange and failure model. 
With respect to the information exchange, it is assumed that the agents record 
information about actions they have performed separately from their record of communications received, 
and do not  exchange information about the specific actions that they have performed. In particular, 
agents should not inform others about the fact that they have made a decision, or what that decision is. 
With respect to the failure model, it is assumed that state corruption failures act independently on the 
action and communications records, and do not corrupt the action record. (The usual omissions failures models 
satisfy this constraint, since they do not allow state corruption errors.) 

We also show that, with respect to a weaker set of assumptions on the information exchange and failure model, 
implementations of the knowledge-based program $\kbp$ satisfy a weaker sense of optimality. 
We show that 
if we allow agents to also exchange the information that they have taken a decision, but not 
\emph{what} that decision is, then the knowledge based program still yields  an implementation that is \emph{optimal} 
amongst protocols using the given information exchange, in the sense that this implementation cannot 
be improved upon by any SBA protocol using that information exchange. (An SBA protocol $P$ improves upon 
an SBA protocol $P'$ if in every run, $P$ decides no later than $P'$, and there are some runs where $P$ decides earlier. 
In this sense of optimality, there may be several incomparable optimal protocols, whereas an optimum protocol is necessarily unique, modulo the
actual decisions made.) 

Finally, we show by example that, under these weaker assumptions, we do not always get an optimum. 
The example uses the sending omissions model, and illustrates a subtle tradeoff arising in limited information 
exchange settings: the behaviour of faulty agents, once they discover that they are faulty, 
can impact the decision time of the nonfaulty agents. 
According to the knowledge-based program, faulty agents should decide as soon as they discover 
they are faulty. However, depending on the information exchange, it may be advantageous for
such agents to instead continue to behave as if they were nonfaulty, if that increases information 
flow to the nonfaulty agents.

Our motivation for developing these results was work reported elsewhere \cite{AlpturerHM25}, 
in which we have been using automated synthesis techniques to derive a concrete protocol
from a knowledge based program and a description of the limited information exchange and  
failure model in which it operates. 
The results of the present paper help us to understand the precise 
optimality guarantees satisfied by the implementations obtained using this process.

The structure of the paper is as follows. We begin in Section~\ref{sec:intsys} by recalling the general \emph{interpreted systems} semantics for the logic of
knowledge, and introducing the modal operators needed for the work. Section~\ref{sec:sba} states the specification for the Simultaneous Byzantine 
Agreement problem. Section~\ref{sec:ixf} describes how an interpreted system is generated from an underlying information exchange protocol, a model of the 
failures against which the solution needs to defend, and a protocol used by agents to make their decisions. In Section~\ref{sec:crash}, 
we reconsider the knowledge based characterization of SBA in the crash failures model 
due to Dwork and Moses \cite{DM90}, and show how this is related to the later 
characterization of  Moses and Tuttle \cite{MT88} for omissions failures. The upshot of this analysis is that the Moses and Tuttle characterization can 
be applied in all cases. We then apply this characterization to study optimality of SBA protocols with respect to limited information exchanges 
in Section~\ref{sec:optimality}. Section~\ref{sec:example} presents a counter-example showing that the knowledge based characterization 
does not always yield an optimum solution in limited information exchange contexts. Section~\ref{sec:concl} concludes with a discussion of 
related work and open problems.

\section{Knowledge in Interpreted Systems} \label{sec:intsys}

We  use the general semantic model of Fagin et al.~\cite{FHMVbook} 
to model  the semantics of the logic of knowledge. 
We model the \emph{global states} of a distributed system involving $n$ agents from the set $\Agents = \{1, \ldots,n\}$
as a set $L_e \times L_1 \times \ldots \times L_n$, where $L_e$ is a set of states of the environment in 
which the agents operate, and each $L_i$, for $i  \in \Agents$, is a set of \emph{local states of agent $i$}. 
A \emph{run} of the system is a function $r: \Nat \rightarrow L_e \times L_1 \times \ldots \times L_n$
mapping times, represented as natural numbers, to global states.  
A \emph{point} is a pair $(r,m)$ consisting of a run $r$ and a time $m$. 
An \emph{interpreted system} is a pair $\I = (\R,\pi)$ consisting of a set $\R$ of runs and an 
\emph{interpretation} $\pi: \R\times \Nat \rightarrow \mathcal{P}(\Prop)$ 
associating a subset of the set $\Prop$ of propositions to each point of the system.  

The semantics of knowledge is defined using a relation $\sim_i$ on points for each agent $i$, 
given by $(r,m) \sim_i(r',m')$ if $r_i(m) = r_i(m')$.  
The interpreted systems we consider in this paper will generally be \emph{synchronous} 
in the sense that if $(r,m) \sim_i(r',m')$ then $m = m'$. 
For each agent $i$, the logic of knowledge has a modal operator $K_i$, such that 
$K_i\phi$ is a formula for each formula $\phi$. Satisfaction of formulas $\phi$ at points $(r,m)$ of an 
interpreted system $\I= (\R,\pi)$
is defined by the relation $\models$, such that
\begin{enumerate} 
\item $\I,(r,m) \models p$ if $p \in \pi(r,m)$, for atomic propositions $p \in \Prop$, and 
\item $\I,(r,m) \models K_i \phi$ if  $\I,(r',m') \models \phi$  for all points $(r',m') \sim_i (r,m)$. 
\end{enumerate} 
A formula $\phi$ is \emph{valid} in an interpreted system $\I$ if 
$\I,(r,m) \models \phi$ for all points $(r,m)$. 

We work with a number of different notions of group knowledge, that operate with respect to an \emph{indexical set} $S$ of agents, which differs from 
point to point in the system. That is, we assume that there is a function $S$ mapping each point of the system to a set of agents. 
The semantics of the atomic formula $i\in S$ is given by $\I,(r,m) \models i \in S$ if   $i\in S(r,m)$.

An agent may not know whether it is in a set $S$. We can define a notion of belief, relative to the indexical set $S$, by $\bel{S}_i \phi = K_i(i\in S \rimp \phi)$.  
We define the notions of ``everyone in $S$ believes'' and ``everyone in $S$ knows'', by 
$\eb{S} \phi = \bigwedge_{i\in S} \bel{S}_i \phi$ and $\ek{S} \phi = \bigwedge_{i\in S} K_i \phi$. 
Common belief, relative to an indexical set $S$, is defined by 
$\cb{S} \phi = \eb{S} \phi \land \eb{S}^2 \phi \land \ldots$.% 
\footnote{
Moses and Tuttle \cite{MT88} define this as $\phi \land \cb{S}\phi$. 
If we write this as $TCB_S( \phi)$ (for ``true common belief)  
we have $TCB_S(\phi)\rimp \phi$ valid even when $S \neq \emptyset$ is not valid. 
However, their application of this operator is for the set $S$ of nonfaulty agents, 
which is always non-empty because they work with the assumption that 
the number $t$ of faulty agents is at most the number of agents minus two.
In all their applications, therefore, $TCB_S( \phi)$ is equivalent to $\cb{S}(\phi)$. 
}
Common knowledge, relative to an indexical set $S$, is  is defined by 
$\ck{S} \phi = \ek{S} \phi \land \ek{S}^2 \phi \land \ldots$.

A more semantic characterization of common knowledge is as follows. 
Define the relations $\sim^*_S$ and $\approx^*_S$ on points of a system $\I$
to the reflexive, transitive closures of the relations 
$\sim_S$ and $\approx_S$ on points given by 
\begin{enumerate} 
\item $(r,m) \sim_S (r',m')$ if there exists $i \in S(r,m)$ such that $(r,m) \sim_i(r',m')$
\item $(r,m) \approx_S (r',m')$ if there exists $i \in S(r,m)\cap S(r',m')$ such that $(r,m) \sim_i(r',m')$
\end{enumerate} 
Then we have that $\I,(r,m) \models \ck{S} \phi$ iff $\I,(r',m') \models  \phi$ for all points $(r',m')$ of $\I$
such that $(r,m) \sim^*_S (r',m')$. 
Similarly, $\I,(r,m) \models \cb{S} \phi$ iff $\I,(r',m') \models  \phi$ for all points $(r',m')$ of $\I$
such that $(r,m) \approx^*_S (r',m')$.

These notions are (greatest) fixed points, satisfying
$\cb{S} \phi \equiv \eb{S} \cb{S}\phi$ and $\ck{S} \phi \equiv \ek{S} \ck{S} \phi$. Provided it is valid that 
$S \neq \emptyset$, we have that 
$\eb{S} \phi \rimp \phi$ and  $\ek{S}\phi \rimp \phi$ and 
$\cb{S}\phi \rimp \phi$ and  $\ck{S}\phi \rimp \phi$ are all valid. These are therefore knowledge-like notions.  
Further, for each of the operators $O\in \{K_i, \bel{S}_i,\eb{S},\ek{S},\cb{S},\ck{S}\}$
we have $O\phi \rimp O\psi$ valid if $\phi\rimp \psi$ is valid.

\begin{proposition} 
\label{prop:ck:contain} 
If $S$ and $T$ are indexical sets such that $S\subseteq T$ is valid, then the formulas
$\bel{T}_i \phi \rimp \bel{S}_i \phi$, 
$\ck{T} \phi \rimp \ck{S} \phi$ and $\cb{T} \phi \rimp \cb{S} \phi$ are valid.
\end{proposition}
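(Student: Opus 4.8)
The plan is to prove the belief inclusion first by a one-step monotonicity argument and then reduce the two common-operator inclusions to it, either by iterating through the finite approximants of the two common operators or by appealing to the semantic characterizations in terms of $\sim^*_S$ and $\approx^*_S$. For the base case $\bel{T}_i\phi \rimp \bel{S}_i\phi$: since $S \subseteq T$ is valid, the formula $i \in S \rimp i \in T$ is valid, and hence by propositional reasoning so is $(i \in T \rimp \phi) \rimp (i \in S \rimp \phi)$. Applying the monotonicity of $K_i$ recorded just before the proposition --- namely that $K_i\chi \rimp K_i\chi'$ is valid whenever $\chi \rimp \chi'$ is --- with $\chi = (i \in T \rimp \phi)$ and $\chi' = (i \in S \rimp \phi)$ gives $K_i(i \in T \rimp \phi) \rimp K_i(i \in S \rimp \phi)$, which is exactly $\bel{T}_i\phi \rimp \bel{S}_i\phi$.

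Next I would establish the one-step group versions $\ek{T}\phi \rimp \ek{S}\phi$ and $\eb{T}\phi \rimp \eb{S}\phi$. At any point $(r,m)$ every $i \in S(r,m)$ also lies in $T(r,m)$, so if $K_i\phi$ holds for all $i \in T(r,m)$ it holds for all $i \in S(r,m)$; this is the $\ek{}$ case. For the $\eb{}$ case the same inclusion of indices, together with the base case $\bel{T}_i\phi \rimp \bel{S}_i\phi$, shows that $\bel{S}_i\phi$ holds for every $i \in S(r,m)$ whenever $\bel{T}_i\phi$ holds for every $i \in T(r,m)$.

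To lift these to $\ck{T}\phi \rimp \ck{S}\phi$ and $\cb{T}\phi \rimp \cb{S}\phi$, the cleanest route is the semantic characterization. Because $S(r,m) \subseteq T(r,m)$ at every point, any witness $i \in S(r,m)$ for $(r,m)\sim_S(r',m')$ is also a witness in $T(r,m)$, whence $\sim_S\,\subseteq\,\sim_T$ and so $\sim^*_S\,\subseteq\,\sim^*_T$; likewise any witness $i \in S(r,m)\cap S(r',m')$ for $\approx_S$ lies in $T(r,m)\cap T(r',m')$, giving $\approx_S\,\subseteq\,\approx_T$ and $\approx^*_S\,\subseteq\,\approx^*_T$. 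The set of points $\sim^*_S$-reachable from $(r,m)$ is then contained in the set that is $\sim^*_T$-reachable, so any $\phi$ true throughout the latter is true throughout the former, which is exactly $\ck{T}\phi \rimp \ck{S}\phi$; the identical argument with $\approx^*$ yields the common-belief case. Alternatively, staying syntactic, one proves by induction on $k$ that $\ek{T}^k\phi \rimp \ek{S}^k\phi$ and $\eb{T}^k\phi \rimp \eb{S}^k\phi$ are valid, using the one-step results as the base case and the monotonicity of $\ek{S}$ and $\eb{S}$ in the inductive step, and then conjoins over all $k$. There is no real obstacle here; the only point needing care is the two-sided membership condition defining $\approx_S$, where the witness index must stay in the indexical set at \emph{both} points --- and it does, precisely because $S \subseteq T$ is assumed valid at every point.
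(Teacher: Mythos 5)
Your proof is correct and takes essentially the same route as the paper: the common-knowledge and common-belief implications are derived, exactly as in the paper's proof, from the pointwise relational inclusions $\sim^*_S \,\subseteq\, \sim^*_T$ and $\approx^*_S \,\subseteq\, \approx^*_T$ induced by the validity of $S \subseteq T$, with your closing remark about the two-sided membership condition in $\approx_S$ being precisely the point the paper's argument also relies on. Your handling of the belief case via monotonicity of $K_i$ applied to the valid implication $(i \in T \rimp \phi) \rimp (i \in S \rimp \phi)$ is a syntactic rephrasing of the paper's direct semantic unfolding of $\bel{T}_i \phi = K_i(i \in T \rimp \phi)$, and your auxiliary one-step $\ek{T}\phi \rimp \ek{S}\phi$ and $\eb{T}\phi \rimp \eb{S}\phi$ results, like the alternative induction on the approximants $\ek{T}^k$ and $\eb{T}^k$, are sound but not needed for the main argument.
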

\begin{proof} 
Suppose that $\I,(r,m) \models \bel{T}_i\phi$. 
Then $\I,(r',m') \models  \phi$ for all points $(r',m') \sim_i (r,m)$ 
such that $i \in T(r',m')$. 
We show that $\I,(r,m) \models \bel{S}_i\phi$. 
Let $(r',m) \sim_i (r,m)$ and suppose that $i \in S(r',m)$. 
Since $S \subseteq T$ is valid in $\I$, we also have $i \in T(r',m)$, 
and it follows that $\I,(r',m') \models  \phi$.  

Similarly, suppose that $\I,(r,m) \models \ck{T}\phi$. 
Then
$\I,(r',m') \models  \phi$ for all points $(r',m')$ of $\I$
such that $(r,m) \sim^*_T (r',m')$. 
When $S \subseteq T$ is valid in $\I$, 
we have  for all points $(r',m')$ that $(r,m) \sim^*_S (r',m')$ implies 
$(r,m) \sim^*_T (r',m')$, hence $\I,(r',m') \models  \phi$. 
This shows that $\I,(r,m) \models \ck{T}\phi$. 

The proof of $\cb{T} \phi \rimp \cb{S} \phi$ is similar, using instead the 
characterization in terms of the relations $\approx^*_{S}$ and $\approx^*_{T}$.
\end{proof} 

\begin{proposition} \label{prop:k:b} 
The formulas
$K_i \phi \rimp \bel{S}_i \phi$, 
$\ek{S} \phi \rimp \eb{S} \phi$ and 
$\ck{S} \phi \rimp \cb{S} \phi$ are valid.
\end{proposition}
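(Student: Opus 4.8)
The plan is to establish the three implications in sequence, obtaining each from the preceding one together with the monotonicity of the operators recorded just above the statement (for every $O\in\{K_i,\bel{S}_i,\eb{S},\ek{S},\cb{S},\ck{S}\}$, validity of $\phi\rimp\psi$ yields validity of $O\phi\rimp O\psi$). The only genuine subtlety is bookkeeping around the indexical set $S$: the conjunctions defining $\eb{S}$ and $\ek{S}$ range over $S(r,m)$ at the point of evaluation, and $\bel{S}_i$ discharges its conclusion vacuously whenever $i\notin S$.

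For the first implication, $K_i\phi\rimp\bel{S}_i\phi$, I would observe that $\phi\rimp(i\in S\rimp\phi)$ is a propositional tautology, hence valid in $\I$. Applying the monotone operator $K_i$ gives $K_i\phi\rimp K_i(i\in S\rimp\phi)$, and the right-hand side is by definition $\bel{S}_i\phi$. Equivalently, one unfolds the semantics: if $\I,(r,m)\models K_i\phi$ and $(r',m')\sim_i(r,m)$, then $\I,(r',m')\models\phi$, so $\I,(r',m')\models i\in S\rimp\phi$ regardless of whether $i\in S(r',m')$. For the second implication, $\ek{S}\phi\rimp\eb{S}\phi$, I would evaluate both sides at an arbitrary point $(r,m)$: then $\I,(r,m)\models\ek{S}\phi$ means $\I,(r,m)\models K_i\phi$ for every $i\in S(r,m)$, and for each such $i$ the first implication yields $\I,(r,m)\models\bel{S}_i\phi$; conjoining over $i\in S(r,m)$ gives $\I,(r,m)\models\eb{S}\phi$.

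For the third implication, $\ck{S}\phi\rimp\cb{S}\phi$, I would use the semantic characterizations of common knowledge and common belief. The key observation is that $\approx_S$ refines $\sim_S$: a witness $i$ for $(r,m)\approx_S(r',m')$ satisfies $i\in S(r,m)\cap S(r',m')$ with $(r,m)\sim_i(r',m')$, so in particular $i\in S(r,m)$ and $(r,m)\sim_i(r',m')$, which is precisely a witness for $(r,m)\sim_S(r',m')$. Hence $\approx_S\,\subseteq\,\sim_S$, and taking reflexive transitive closures gives $\approx^*_S\,\subseteq\,\sim^*_S$. Consequently the set of points $\approx^*_S$-reachable from $(r,m)$ is contained in the set of points $\sim^*_S$-reachable from $(r,m)$. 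Since $\I,(r,m)\models\ck{S}\phi$ asserts that $\phi$ holds at every $\sim^*_S$-reachable point, it in particular holds at every $\approx^*_S$-reachable point, which is exactly $\I,(r,m)\models\cb{S}\phi$. (Alternatively, one could prove $\ek{S}^k\phi\rimp\eb{S}^k\phi$ by induction on $k$, using the second implication in the inductive step together with monotonicity of $\ek{S}$, and then conjoin over all $k\geq 1$.)

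The argument is essentially routine, so I do not expect a substantial obstacle; the one point demanding care is the asymmetry in the definitions of $\sim_S$ and $\approx_S$ — the former requires the witnessing agent to lie in $S$ only at the source point, whereas the latter requires membership at both endpoints. This asymmetry is exactly what makes the inclusion $\approx^*_S\,\subseteq\,\sim^*_S$ (rather than an equality) hold, and hence it is what drives the direction of the third implication; getting this containment the right way round is the main thing to verify.
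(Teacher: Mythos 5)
Your treatment of the first two implications is correct and coincides with the paper's: $K_i\phi\rimp\bel{S}_i\phi$ falls straight out of the definition $\bel{S}_i\phi = K_i(i\in S\rimp\phi)$, and $\ek{S}\phi\rimp\eb{S}\phi$ follows conjunct-by-conjunct over $i\in S(r,m)$. For the third implication, however, your primary argument takes a genuinely different route. The paper stays syntactic: it proves $\ek{S}^k\phi\rimp\eb{S}^k\phi$ for all $k>0$ by induction (base case the second implication; in the step, $\ek{S}^{k+1}\phi$ gives $\ek{S}(\ek{S}^k\phi)$, the outer operator is converted via the $k=1$ case to $\eb{S}(\ek{S}^k\phi)$, and monotonicity of $\eb{S}$ applied to the inductive hypothesis yields $\eb{S}^{k+1}\phi$), then conjoins over $k$ using $\ck{S}\phi=\bigwedge_{k>0}\ek{S}^k\phi$ and $\cb{S}\phi=\bigwedge_{k>0}\eb{S}^k\phi$. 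You instead argue semantically from the witness-wise inclusion $\approx_S\subseteq\sim_S$ --- and your closing remark about the asymmetry (membership in $S$ required at both endpoints for $\approx_S$, only at the source for $\sim_S$) is exactly the right thing to isolate, since it is what makes the inclusion, and hence the direction of the implication, come out correctly. This is the same machinery the paper itself deploys in the proof of Proposition~\ref{prop:ck:contain}, so it is entirely within the paper's toolkit; it buys a shorter and more transparent argument, at the price of resting on the closure characterizations of $\ck{S}$ and $\cb{S}$, which the paper asserts without proof (and which carry a mild wrinkle: at points where $S(r,m)=\emptyset$ the \emph{reflexive} closure is slightly too strong for $\cb{S}$, which is why $\cb{S}\phi\rimp\phi$ is claimed only under validity of $S\neq\emptyset$; this does not affect your argument, since you use the same closure convention on both sides and need only the containment of reachable sets). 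The paper's induction avoids that dependence by working directly from the definitions. Your parenthetical alternative is essentially the paper's proof, modulo the order in which the $k=1$ case and monotonicity are applied in the inductive step --- both orders work.
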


\begin{proof} 
Validity of $K_i \phi \rimp \bel{S}_i \phi$ is immediate from the fact that 
$\bel{S}_i \phi$ is $K_i(i \in S \rimp \phi)$. 

For $\ek{S} \phi \rimp \eb{S} \phi$, 
note that if $\ek{S} \phi$ then $\bigwedge_{i \in S} K_i \phi$, 
which implies $\bigwedge_{i \in S} \bel{S}_i\phi$ by the previous paragraph, 
and this is $\eb{S} \phi$. 

For $\ck{S} \phi \rimp \cb{S} \phi$,
we show by induction that  $\ek{S}^k \phi \rimp \eb{S}^k \phi$  is valid for all $k>0$. 
The base case of $k=1$ is the result of the previous paragraph. 
Assuming $\ek{S}^k \phi \rimp \eb{S}^k \phi$ is valid, we have that 
if $\ek{S}^{k+1} \phi$ then $\ek{S}(\ek{S}^{k} \phi)$, which implies 
$\eb{S}(\ek{S}^{k} \phi)$ by the result of the first paragraph, and 
then $\eb{S}(\eb{S}^{k} \phi) = \eb{S}^{k+1} \phi$ by the inductive hypothesis.
It follows that  $\ck{S} \phi = \bigwedge_{k>0} \ek{S}^k \phi$ 
implies  $\bigwedge_{k>0} \eb{S}^k \phi = \cb{S} \phi$. 
\end{proof}

\section{Simultaneous Byzantine Agreement} \label{sec:sba} 

The specification of Simultaneous Byzantine Agreement concerns a set of agents, operating subject to faults, 
who are  required to reach a common decision on a set of values from some set $\Values$. 
Each agent $i$ starts with a preferred value $\init_i$ for the decision to be made. 
At each moment of  time, each agent $i$ chooses an action from the set $A_i = \{\noop\} \cup \{\decide_i(v) ~|~v \in \Values\}$. 

We may state the specification SBA($S$) of Simultaneous Byzantine Agreement with respect to an indexical set $S$ as follows: 
\begin{itemize} 
\item[] \textbf{Unique-Decision:} Each agent $i$ performs an action $\decide_i(v)$ (for some $v$) at most once.%
\footnote{In Byzantine contexts, with $S$ equal to the set of nonfaulty agents, 
it would be appropriate to change this to say that each agent $i\in S$ performs an action $\decide_i(v)$ (for some $v$) at most once, 
since the condition as stated cannot be guaranteed. However, in benign failure models this stronger condition 
can be easily satisfied.}
\item[] \textbf{Simultaneous-Agreement(S):} If $i\in S$ and $i$ performs $\decide_i(v)$ then, at the same time, 
all $j\in S$ also perform $\decide_j(v)$.
\item[] \textbf{Validity(S):} If $i\in S$ and $i$ performs $\decide_i(v)$ then there exists an agent $j$ with $\init_j = v$. 
\end{itemize} 

There are variances in the literature as to the set $S$ that should be used in this specification. 
Most work takes $S$ to be the set $\N$ of nonfaulty agents. However, 
Dwork and Moses \cite{DM90} (on the crash failure model) appears to refer to the nonfaulty agents, informally, in their 
introduction, but work with the active (nonfailed) agents $\A$ in their proofs. 
We consider the alternatives below in order to clarify these points.

Various termination requirements are considered in the literature. We omit a termination condition in 
order to study the problem in settings with limited information exchange, 
that do not always allow a decision to  be made.

\section{Information Exchange Protocols and Failure Models} \label{sec:ixf}

To model protocols for SBA under a variety of failure models, and study the effect of a range of assumptions about how 
agents in these protocols exchange information, we compose protocols into two parts, 
a \emph{decision protocol} $P$ and an \emph{information exchange} $\exchange$. 
The environment in which the agents operate will be modelled as \emph{failure model} $\failures$. 

An information exchange $\exchange$ associates to each agent $i$ 
a tuple $\exchange_i = \langle L_i, I_i, \Msg_i, \mu_i, \delta_i\rangle$, where 
\begin{enumerate} 
\item $L_i$ is a set of local states for agent $i$; 
\item $I_i\subseteq L_i$ is a set of \emph{initial states}; 
\item $\Msg_i$ is a set of \emph{messages} that agent $i$ may send, assumed to contain the value $\bot$
representing that the agent sends no message; 
\item $\mu_i: L_i \times A_i \rightarrow (\Agents \rightarrow \Msg_i)$ is a function, 
such that $\mu_i(s,a)(j)$ represents the message that agent 
$i$, with local state $s$, sends agent $j$ in a round in which it performs action $a$; 
\item $\delta_i: L_i \times A_i \times \Pi_{j \in Agents} \Msg_j \rightarrow L_i$, is a function, such that 
$\delta_i(s,a,(m_1, \ldots ,m_n))$ represents the local state of  agent $i$ immediately after a round in which 
the agent started in local state $s$, performed action $a$, and received messages $(m_1, \ldots,m_n)$
from agents $1,\ldots, n$ respectively. 
\end{enumerate} 
A decision protocol $P$ for an information exchange $\exchange$ consists of a function $P_i: L_i \rightarrow A_i$ for each agent $i$.

We focus here on \emph{synchronous} protocols in which local states in $L_i$ are of the form 
$\langle \init_i, \Time_i, \ldots\rangle $, where $\init_i\in \Values$ represents agent $i$'s initial preference for the decision 
\begin{tarkabsout} to be made, 
\end{tarkabsout} 
and $\Time_i$ represents the current time. (In the case of the crash failures model, there is also an additional state $\crashed_i$.)
The 
\begin{tarkabsout} update 
\end{tarkabsout} 
function $\delta_i$  acts so that if $\delta_i(\langle \init_i, \Time_i, \ldots\rangle , a,m) = \langle \init'_i, \Time'_i, \ldots\rangle $
then $\init'_i = \init_i$ and $\Time'_i = \Time_i +1$. 

In the \emph{full information} information exchange $\exchange_{FIP}$ for SBA, 
agents' initial local states consist of their initial preferences,  
agents send their complete local states to all other agents in each round,
and update their states by recording all messages received in their local state.
That is, initial states are values $\ \init_i$, for all agents $i,j$, states $s\in L_i$, and actions $a$, we have  $\mu_i(s,a)(j) = s$, 
and $\delta_i(s,a,m) = s\cdot m$ for all message vectors $m$. (The action $a$ and the time  are not recorded explicitly  in the local state, in this model, 
  but can be deduced.)

A failure model is given by a tuple $\failures = \langle L^*_e,I_e, \delta_e, \adv\rangle$ 
\begin{enumerate} 
\item $L^*_e$ is a set of states of the environment.
\item $I_e\subseteq L^*_e$ is a nonempty set of initial states of the environment.
\item $\delta_e : L^*_e \times \Pi_{i \in \Agents} A_i \rightarrow L^*_e$, such that 
$\delta_e(s,(a_1, \ldots, a_n))$ represents how the state of the environment is updated in a round in which 
agents perform actions $a_1,\ldots,a_n$.  (Dependence on agent actions allows the environment to record information about the actions performed by the agents. We could also 
include here a dependence on the messages sent, 
but we will not need this for the failure models considered in this paper.) 

\item $\adv$ is a nonempty set of adversaries, where each adversary is given by a tuple $\langle \Delta^t,  \Delta^r,\Delta^s\rangle$, where 
\begin{itemize} 
\item $\Delta^t: \Nat \times \Agents\times  \Agents \times \bigcup_{i \in \Agents} \Msg_i \rightarrow \bigcup_{i \in \Agents} \Msg_i $ is a function, such that 
$\Delta^t(k,i,j,m)$ is a message resulting from a fault, if any, through which the 
environment perturbs the message $m$ transmitted by agent $i$ to agent $j$ 
in round $k+1$. 
\item $\Delta^r: \Nat \times \Agents\times  \Agents \times \bigcup_{i \in \Agents} \Msg_i \rightarrow \bigcup_{i \in \Agents} \Msg_i $ is a function, such that 
$\Delta^r(k,i,j,m)$ is a message resulting from a fault, if any, through which the 
environment perturbs the message $m$ received by agent $j$ from agent $i$ 
in round $k+1$. 
\item $\Delta^s = (\Delta^s_i: \Nat \times  L_i \rightarrow L_i)_{i \in \Agents}$ is a collection of functions, 
representing effects that faults have on the agents' local states, such  that 
$\Delta^s_i(k, s^*_i) = s'_i$ when the effect of the fault, if any, is to cause 
state $s^*_i$ of agent $i$ to be modified in round $k+1$ to state $s'_i$, for each agent $i$.
(Here we write $s^*_i$ to indicate the state of the agent \emph{after} it has applied its state update function $\delta_i$ for the round.)
\end{itemize} 
\end{enumerate} 

Given a decision protocol $P$, information exchange $\exchange$ and failures model $\failures$, 
we define the interpreted system $\I_{P,\exchange, \failures}= (\R_{P,\exchange, \failures},\pi)$
with global states $L_e \times L_1 \times \ldots \times L_n$, 
where $L_e = L^*_e \times \adv$, 
and the runs $r\in \R_{P,\exchange, \failures}$ are 
defined by 
\begin{enumerate} 
\item $r(0) = ((s_e, (\Delta^t,\Delta^r,\Delta^s)),s_1, \ldots , s_n)$, where 
$s_e \in I_e$ and $s_i \in I_i$ for each $i \in \Agents$, and $(\Delta^t,\Delta^r,\Delta^s) \in \adv$. 
\item for all times $k$, if $r(k) = ((s_e, (\Delta^t,\Delta^r,\Delta^s)),s_1, \ldots , s_n)$,
 then $r(k+1)$ is the state $((s'_e, (\Delta^t,\Delta^r,\Delta^s)),s'_1, \ldots , s'_n)$ obtained as follows. 
 
 For each agent $i$, let $a_i = P_i(s_i)$ be the action selected by the decision protocol, 
 and let $m_{i,j} = \mu_i(s_i,a_i)(j)$ be the message that agent $i$ sends to agent $j$, 
 according to the information exchange $\exchange_i$. 
 
 Note that the adversary $(\Delta^t,\Delta^r,\Delta^s)$ is the same in $r(k)$ and $r(k+1)$. 
 The remaining state of the environment is updated from $s_e$ to 
 $s'_e = \delta_e(s_e,(a_1, \ldots,a_n))$.

 For each agent $i$ and $j$, let $m'_{i,j}= \Delta^r(k,i,j,\Delta^t(k,i,j,m_{i,j}))$ be the message resulting from  
 any faults caused by the adversary in the transmission from $i$ to $j$ 
 (first by application of the function $\Delta^t$ at the sender $i$, and then by application of the function 
 $\Delta^r$ at the receiver $j$.  
 Thus, for each agent $j$, the messages received by agent $j$ are $(m'_{1,j}, \ldots, m'_{n,j})$. 
 The expected effect of these message receptions on the agents' local states 
 is to transition from $(s_1, \ldots , s_n)$ to 
 $(s^*_1, \ldots , s^*_n)$, where $s_j^* = \delta_j(s_j,a_j,(m'_{1,j}, \ldots,m'_{n,j}))$. 
 We define $s'_i= \Delta^s_i(k,s^*_i)$ for each agent $i$. That is, we apply the perturbation 
 $\Delta^s$ to the local states of the agents after they have updated their local states according
 to the information exchange. 

\end{enumerate} 

Agents may experience a number of different types of faults. 
Agent $i$ has a \emph{transmission fault} in round $k+1$ of run $r$ 
if $\Delta^t(k,i,j,m_{i,j}) \neq m_{i,j}$, where $m_{i,j}$ is the message sent by $i$ to $j$ in round $k+1$. 
Agent $j$ has a \emph{reception fault} in round $k+1$ of run $r$ 
if $\Delta^r(k,i,j,m_{i,j}) \neq m_{i,j}$, where $m_{i,j}$ is the message delivered from $i$ to $j$ in round $k+1$. 
Agent $i$ has a \emph{state} fault if, in round $k+1$, we have 
$s_i'= \Delta^s_i(k,s_i^*)$ and $s'_i\neq s^*_i$. 
If none of these types of faults apply, then we say that agent $i$ does not have a fault in 
round $k+1$. 
We say that an agent $i$ is \emph{faulty} in a run $r$ if it has a fault of any type for some round $k\in \Nat$. 
Agent $i$ is \emph{nonfaulty to time $k$} if it does not have a fault in rounds $1 \ldots k$ in run $r$. 
We define the indexical set $\N(r,k)$ to be the set of agents that are not faulty in $r$, 
and the indexical set $\A(r,k)$ to be the set of agents that are not faulty to time $k$. 

We remark that because time is encoded in agents' local states, in order to satisfy the synchrony constraint 
in interpreted systems, we henceforth assume that the state perturbation function $\Delta^s_i$ 
does not modify the value of the $\Time_i$ component of a local state. That is, we assume that 
even faulty agents have reliable clocks.

The interpretation $\pi$ gives meaning to propositions, dependent on the protocol, information exchange and  failure model. In particular,  for agents $i$, values $v\in \Values$, and indexical sets $S,T$,
\begin{itemize} 
\item $\decides_i(v)$ is in $\pi(r,m)$ if $P_i(r,m) = \decide_i(v)$;
\item $i\in S$ is in $ \pi(r,m)$ if $i \in S(r,m)$; 
\item $S \subseteq T$ is in $\pi(r,m)$ if $S(r,m) \subseteq T(r,m)$; 
\item $S = \emptyset$ is in $\pi(r,m)$ if $S(r,m) = \emptyset$; 
\item $\exists v$ is in $\pi(r,m)$ if there exists an agent $i$ with $\init_i = v$ in $r_i(0)$. 
\end{itemize} 

Plainly, $\N \subseteq \A$ is valid; any agent that never fails will not have failed 
before the current time. 
Note that $\N$ is independent of the time, and depends only on the run: 
$\N(r,m) = \N(r,m')$ for all times $m,m'$. This does not hold for 
$\A$. 

A \emph{context for SBA} is a pair $\gamma = (\exchange, \failures)$, where $\exchange$ is an information exchange and $\failures$ is a failure model. 
For brevity we may also write $\I_{P,\gamma}$ for the interpreted system $\I_{P,\exchange,\failures}$.

Commonly studied failure models from the literature can be represented in the above form. 
We say that $\Delta^s$ is \emph{correct} for agent $i$ if $\Delta^s_i(k, s_i) = s_i$ for all $s_i\in L_i$ and $k \in \Nat$. 
Similarly $\Delta^t$ is correct for agent $i$ if $\Delta^t(k,i,j,m) = m$ for all $k,j$ and $m$, and 
$\Delta^r$ is correct for agent $j$ if $\Delta^r(k,i,j,m) = m$ for all $k,i$ and $m$.  
\begin{itemize} 
\item In the \emph{hard crash} failures model of \cite{DM90}, agents may crash at any time. In the round 
in which an agent crashes, it sends an arbitrary subset of the set of messages it was required to send in the round. 
To represent this model, we require that agents'  local state sets $L_i$ contain a distinguished state $\crashed$. 
We always take $\Delta^r$ to be correct for all agents $i$. 
An adversary for which agent $i$ crashes in round $k+1$ has $\Delta^s_i(k,s_i) = \crashed$ for all $s_i\in L_i$, and 
there exists a set $J\subseteq \Agents$ such that, for all messages $m$,
$\Delta^t(k,i,j,m) = \bot$ for $j\in J$, and $\Delta^t(k,i,j,m) = m$ for $j\in \Agents\setminus J$. 
For $k'> k$, we also have  $\Delta^s_i(k',s_i) = \crashed$, and 
$\Delta^t(k',i,j,m) = \bot$ for all agents $j$.
For agents that do not crash, $\Delta^s,\Delta^t$ and $\Delta^r$ are correct. 
We write $\Crash_t$ for the failure model in which $\adv$ contains the adversaries in which 
$t$ or fewer agents may crash. 

\item In the \emph{communications crash} version of the crash failures model used in \cite{FHMVbook}, again agents may crash at any time, and in the round 
in which an agent crashes, it sends an arbitrary subset of the set of messages it was required to send in the round. 
However, we do not require for this model that agents'  local state sets $L_i$ contain the distinguished state $\crashed$. 
Instead, failures in this model can be understood as crashes of the agent's transmitter. 
An adversary for which agent $i$ crashes in round $k+1$ has $\Delta^s$ correct for all $i$, 
and there exists a set $J \subseteq \Agents$ such that, for all messages $m$, 
 $\Delta^t(k,i,j,m) = \bot$ for $j\in J$, and $\Delta^t(k,i,j,m) = m$  for $j\in \Agents \setminus J$, and 
for $k'\geq k$, we also have $\Delta^t(k,i,j,m) = \bot$ for all agents $j$.
In all other cases, $\Delta^s$, $\Delta^t$, and $\Delta^r$ are correct. 
We write $\ComCrash_t$ for the failure model in which $\adv$ contains the adversaries in which 
$t$ or fewer agents may crash. 

\item In the \emph{Sending Omissions} model $SO_t$, $\Delta^s$ and $\Delta^r$ are correct for all agents, but $\Delta^t$ 
may allow failures for up to $t$ agents. 

\item In the \emph{Receiving Omissions} model $RO_t$, $\Delta^s$ and $\Delta^t$ are correct for all agents, but $\Delta^r$ 
may allow failures for up to $t$ agents.

\item In the \emph{General Omissions} model $GO_t$, $\Delta^s$ is correct for all agents, but $\Delta^r$ and $\Delta^t$ 
may allow failures may allow failures for up to $t$ agents.
\end{itemize} 
Other types of failure assumptions can also easily be modelled, such as crashing agents sending messages to a 
\emph{prefix} of the list of agents $[1\ldots n]$, atomic transmission  failures in which a failing agent transmits 
to no other agents, message corruption, etc.

\section{Crash Failures} \label{sec:crash} 

\newcommand{\ccrash}{\gamma_{\mathit{Crash}(t,n)}}

We first consider some subtleties relating to the hard crash failures model and the knowledge based program 
used by Dwork and Moses \cite{DM90}.  This context has 
consequences for the agent's knowledge, and affects the knowledge based program 
developed in \cite{DM90}. In this context, 
$\N$ represents the nonfaulty agents and  the set $\A$ of agents that have not failed to the current time
is the set of \emph{active} agents, that have not yet crashed. 

The specification for 
SBA  for the crash failures model appears to be given by Dwork and Moses 
as SBA($\N$), i.e., with respect to \emph{nonfaulty} agents. On the other hand, it is 
stated by Fagin et al.~\cite{FHMVbook} as SBA($\A$), i.e., for the \emph{nonfailed} agents. 
Moses and Tuttle \cite{MT88} consider omissions failures, and state a specification that is a generalization 
of SBA($\N$)
(to a richer set of coordinated action problems, and allowing the inclusion of a termination requirement). 
The use of $\N$ appears to be the more common approach in the broader literature on distributed algorithms. 
We may note the following relationship between these specifications, 

\begin{proposition} 
Let $\gamma$  be any context for SBA, and $P$ any protocol for this context, 
and let $S$ and $T$ be indexical sets of agents such that $\I_{P,\gamma} \models S \subseteq T$.
If $\I_{P,\gamma} \models \text{SBA}(T)$ then $\I_{P,\gamma} \models \text{SBA}(S)$.
In particular if $\I_{P,\gamma} \models \text{SBA}(\A)$ then $\I_{P,\gamma} \models \text{SBA}(\N)$.
\end{proposition}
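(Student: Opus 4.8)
The plan is to verify separately that each of the three conjuncts constituting $\text{SBA}(S)$ --- Unique-Decision, Validity($S$), and Simultaneous-Agreement($S$) --- is entailed by $\text{SBA}(T)$ together with the validity of $S \subseteq T$. Since each conjunct is a universally quantified statement over points $(r,m)$, I would fix an arbitrary point and discharge the conditions pointwise, using that $\I_{P,\gamma} \models S \subseteq T$ gives $S(r,m) \subseteq T(r,m)$ at every such point.

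First, Unique-Decision mentions no indexical set, so it is literally the same conjunct in both specifications and transfers with nothing to prove. For Validity($S$), I would fix $(r,m)$, assume $i \in S(r,m)$ performs $\decide_i(v)$, deduce $i \in T(r,m)$ from $S(r,m) \subseteq T(r,m)$, and then apply Validity($T$) to obtain an agent $j$ with $\init_j = v$ --- which is exactly the conclusion Validity($S$) demands.

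The step requiring the most care is Simultaneous-Agreement, because passing from $T$ to the smaller set $S$ simultaneously weakens the antecedent (``$i \in S$'') and the consequent (``all $j \in S$ decide''), so the correct direction of the implication is not visible at a glance. The argument I would give is: fix $(r,m)$ and suppose $i \in S(r,m)$ performs $\decide_i(v)$; then $i \in T(r,m)$, so Simultaneous-Agreement($T$) forces every $k \in T(r,m)$ to perform $\decide_k(v)$ at $(r,m)$; since $S(r,m) \subseteq T(r,m)$, every $j \in S(r,m)$ is among these, and hence performs $\decide_j(v)$ at $(r,m)$, establishing Simultaneous-Agreement($S$). The essential observation is that the $T$-level agreement already supplies decisions for the superset $T(r,m) \supseteq S(r,m)$, so the smaller set of agents whose agreement is required is automatically covered.

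Finally, the ``in particular'' statement follows by instantiation: the excerpt already records that $\N \subseteq \A$ is valid in every such system, so setting $S := \N$ and $T := \A$ yields $\text{SBA}(\A) \Rightarrow \text{SBA}(\N)$. I anticipate no genuine obstacle; the only thing to watch is that all three indexical sets are evaluated at the same point $(r,m)$ at which the decision occurs, so that the pointwise inclusion $S(r,m) \subseteq T(r,m)$ can legitimately be invoked.
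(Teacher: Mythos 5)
Your proposal is correct and follows essentially the same route as the paper, which also verifies the three conjuncts separately: Unique-Decision transfers trivially since it mentions no indexical set, while Validity($S$) and Simultaneous-Agreement($S$) follow pointwise from their $T$-versions via the inclusion $S(r,m) \subseteq T(r,m)$, with the $\N \subseteq \A$ instantiation closing the ``in particular'' claim. Your more detailed treatment of Simultaneous-Agreement --- noting that strengthening the antecedent $i \in S \Rightarrow i \in T$ lets the $T$-level conclusion cover the subset $S$ --- is exactly the reasoning the paper compresses into ``for the same reason.''
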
 
\begin{proof} 
The Unique-Decision is property is independent of the indexical set in the specification, so holds trivially. 
Validity($T$) implies Validity($S$) since $S\subseteq T$ is valid. 
Also, Simultaneous-Agreement($T$) implies Simultaneous-Agreement($S$) for the same reason. 
Thus validity of SBA($T$) implies validity of SBA($S$). 
The fact that $\I_{P,\gamma} \models \text{SBA}(\A)$ implies $\I_{P,\gamma} \models \text{SBA}(\N)$
follows directly from the fact that $\N\subseteq \A$ is valid. 
\end{proof} 

Under certain conditions, we also have a converse to this result. 

\begin{proposition} \label{prop:subsetSBA}
Suppose that $P$ is a protocol for the context $\gamma$, and let $S$ and $T$ be indexical sets of agents in $\I_{P,\gamma}$, 
such that  
\begin{itemize} 
\item[(a)] for all points $(r,m)$, if $\I_{P,\gamma},(r,m) \models i\in T \land j\in T$, then there exists a run $r'$ such that $(r,m) \sim_i (r',m)$ and $(r,m) \sim_j(r',m)$, and 
$\I_{P,\gamma},(r,m) \models i\in S \land j\in S$, and 
\item[(b)] $\I_{P,\gamma} \models  S \subseteq T $, and 
\item[(c)] $\I_{P,\gamma} \models T \neq \emptyset \rimp S \neq \emptyset$.
\end{itemize} 
Then  $\I_{P,\gamma} \models \text{SBA}(S)$ implies  $\I_{P,\gamma} \models \text{SBA}(T)$. 
\end{proposition}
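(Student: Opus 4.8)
The plan is to verify the three conjuncts of $\text{SBA}(T)$ separately, using throughout the fact that an agent's action at a point is a function of its local state alone: since $\decides_i(v)$ holds at $(r,m)$ exactly when $P_i(r_i(m)) = \decide_i(v)$, whenever $(r,m)\sim_i(r'',m)$ agent $i$ performs the same action at both points. Unique-Decision is independent of the indexical set, so it carries over from $\text{SBA}(S)$ with no work. The substance lies in Simultaneous-Agreement($T$), which I would establish first using hypothesis (a), and then Validity($T$), which I would reduce to the simultaneous-agreement property together with (b) and (c).

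For Simultaneous-Agreement($T$), suppose $i\in T(r,m)$ performs $\decide_i(v)$ at $(r,m)$, and let $k\in T(r,m)$ be arbitrary; the goal is to show $k$ performs $\decide_k(v)$ at the same point. Applying (a) to the pair $i,k\in T(r,m)$ yields a run $r'$ with $(r,m)\sim_i(r',m)$ and $(r,m)\sim_k(r',m)$ at whose point $(r',m)$ both $i$ and $k$ lie in $S$. Because $i$'s local state agrees in $r$ and $r'$, agent $i$ still performs $\decide_i(v)$ at $(r',m)$; since $i\in S(r',m)$, Simultaneous-Agreement($S$) forces every agent of $S(r',m)$, in particular $k$, to perform $\decide_k(v)$ at $(r',m)$. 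Transporting back along $(r,m)\sim_k(r',m)$, agent $k$ performs the same action $\decide_k(v)$ at $(r,m)$. As $k$ was an arbitrary member of $T(r,m)$, this is exactly Simultaneous-Agreement($T$).

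For Validity($T$), suppose $i\in T(r,m)$ performs $\decide_i(v)$. Rather than attempt to transport the existence of an agent preferring $v$ across an $\sim_i$-related run, I would argue within $r$ itself. Simultaneous-Agreement($T$), just established, shows that every agent in $T(r,m)$ performs $\decide(v)$ at $(r,m)$. Since $i\in T(r,m)$ we have $T(r,m)\neq\emptyset$, so (c) gives $S(r,m)\neq\emptyset$; picking any $k\in S(r,m)$, (b) places $k$ in $T(r,m)$, whence $k$ performs $\decide_k(v)$ at $(r,m)$. Now $k\in S(r,m)$ decides $v$ in the run $r$, so Validity($S$) supplies an agent $j$ with $\init_j=v$ in $r$, which is precisely what Validity($T$) requires.

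I expect the main obstacle to be Validity($T$): the naive route through the witness run $r'$ of (a) breaks down because the proposition $\exists v$ is global and is not in general determined by $i$'s local state, so it cannot be pulled back along $\sim_i$. The idea that makes the argument go through is to defer Validity($T$) until after Simultaneous-Agreement($T$) is proved, and then use (c) to produce a genuine member of $S$ that decides $v$ in the original run, so that Validity($S$) applies there directly. The only other point requiring care is the intended reading of hypothesis (a), namely that it is the witnessing point $(r',m)$, and not $(r,m)$, at which $i$ and $j$ belong to $S$; under the literal reading (a) would collapse $S$ into $T$ and trivialize the statement.
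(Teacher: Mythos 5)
Your proof is correct and takes essentially the same route as the paper's: you establish Simultaneous-Agreement($T$) first by transporting $\decides_i(v)$ along $\sim_i$ to the witness run from (a), applying Simultaneous-Agreement($S$) there, and transporting back along $\sim_j$; you then deduce Validity($T$) inside the run $r$ itself, using (c) and (b) to exhibit a member of $S(r,m)$ that decides $v$ so that Validity($S$) applies directly --- exactly the paper's argument, with Unique-Decision immediate in both. Your reading of hypothesis (a), placing $i\in S \land j\in S$ at the witness point $(r',m)$ rather than at $(r,m)$, is indeed the intended one (the paper's own application in Corollary~\ref{cor:sbaNAcrash} verifies membership in $\N$ at the modified run $r'$), so you have correctly identified and repaired a typo rather than introduced a deviation.
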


\begin{proof} 
Assume that $\I_{P,\gamma} \models \text{SBA}(S)$. We first show that 
Simultaneous-Agreement($T$) is valid in $\I_{P,\gamma}$.
Suppose $\I_{P,\gamma},(r,m) \models i \in T\land \decides_i(v)$ and let $j \in T(r,m)$. 
We show that $\I_{P,\gamma},(r,m) \models \decides_j(v)$.
By (a), there exists a point $(r',m)$ such that $(r,m) \sim_i(r',m)$ and 
$(r,m) \sim_j(r',m)$ and $\I_{P,\gamma},(r,m) \models i \in S\land j \in S$. 
Since $(r,m) \sim_i(r',m)$, we have $P_i(r'_i(m)) = P(r_i(m)) = \decide_i(v)$, 
so also  $\I_{P,\gamma},(r',m) \models \decides_i(v)$.
Since $\I_{P,\gamma} \models \text{SBA}(S)$ we have Simultaneous-Agreement($S$) 
and it follows  
that $\I_{P,\gamma},(r',m) \models \decides_j(v)$.
Because  $(r,m) \sim_j (r',m)$, we also have $\I_{P,\gamma},(r,m) \models \decides_j(v)$, as required. 

Next, we show Validity($T$) is valid in $\I_{P,\gamma}$. Let $(r,m)$ be a point where 
$\I_{P,\gamma},(r,m) \models \decides_i(v) \land i \in T$. Since Simultaneous-Agreement($T$) is valid in $\I_{P,\gamma}$, 
as shown above, 
we have $\I_{P,\gamma},(r,m) \models \decides_j(v)$ for all $j\in T(r,m)$. Since $S\subseteq T$ is valid, by (b), we have 
$\I_{P,\gamma},(r,m) \models \decides_j(v)$ for all $j\in S(r,m)$. Because 
$S(r,m) \neq \emptyset$, by (c) and the fact that $i \in T(r,m)$, there exists 
$j \in  S(r,m)$ such that $\I_{P,\gamma},(r,m) \models \decides_j(v)$. It now follows from Validity($S$) that 
 $\I_{P,\gamma},(r,m) \models \init_k = v$ for some agent $k$. 
 
The property Unique-Decision is the same in SBA($S$) and SBA($T$), so this is immediate. 
\end{proof}

\begin{corollary} \label{cor:sbaNAcrash}
For crash failures and omissions failure contexts $\gamma$ and protocols $P$, with $\I_{P,\gamma} \models \N \neq \emptyset$, 
we have  $\I_{P,\gamma} \models \text{SBA}(\N)$ implies $\I_{P,\gamma} \models \text{SBA}(\A)$. 
\end{corollary}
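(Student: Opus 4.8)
The plan is to obtain this as a direct instance of Proposition~\ref{prop:subsetSBA}, taking $S = \N$ and $T = \A$. Of the three hypotheses of that proposition, (b) is exactly the validity of $\N \subseteq \A$ noted just above the corollary, and (c) holds trivially: since $\I_{P,\gamma} \models \N \neq \emptyset$ is assumed, the implication $\A \neq \emptyset \rimp \N \neq \emptyset$ is valid. So everything reduces to establishing hypothesis (a): for every point $(r,m)$ with $\I_{P,\gamma},(r,m) \models i \in \A \land j \in \A$, I must produce a run $r'$ with $(r,m) \sim_i (r',m)$ and $(r,m) \sim_j (r',m)$ at which $i$ and $j$ are both nonfaulty, i.e. $\I_{P,\gamma},(r',m) \models i \in \N \land j \in \N$.

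The construction of $r'$ is the heart of the argument. Given $(r,m)$ with $i,j \in \A(r,m)$, let $F = (\Delta^t,\Delta^r,\Delta^s)$ be the adversary of $r$. I would define a new adversary $F' = (\Delta^{t\prime},\Delta^{r\prime},\Delta^{s\prime})$ that agrees with $F$ on all arguments whose first component $k$ satisfies $k < m$ (i.e. on rounds $1,\ldots,m$) and that, for all $k \ge m$, is correct for both $i$ and $j$: I set $\Delta^{t\prime}(k,i,\cdot,\cdot)$ and $\Delta^{t\prime}(k,j,\cdot,\cdot)$, $\Delta^{r\prime}(k,\cdot,i,\cdot)$ and $\Delta^{r\prime}(k,\cdot,j,\cdot)$, and $\Delta^{s\prime}_i(k,\cdot)$, $\Delta^{s\prime}_j(k,\cdot)$ all to be the identity, while leaving every other component equal to that of $F$. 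Let $r'$ be the run with the same initial environment state $s_e$ and the same initial agent states as $r$, but with adversary $F'$. Since $F$ and $F'$ coincide on rounds $1,\ldots,m$, an easy induction on $\ell$ shows $r'(\ell) = r(\ell)$ for all $\ell \le m$; in particular $r'_i(m) = r_i(m)$ and $r'_j(m) = r_j(m)$, giving $(r,m) \sim_i (r',m)$ and $(r,m) \sim_j (r',m)$. Moreover, $i,j \in \A(r,m)$ means $i$ and $j$ have no fault in rounds $1,\ldots,m$ of $r$, hence none in the identical prefix of $r'$, while the choice of $F'$ ensures neither has any fault in rounds $m+1, m+2, \ldots$; thus $i$ and $j$ are nonfaulty in $r'$, as required.

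The remaining, and I expect principal, obstacle is to verify that $F'$ is itself admissible, that is $F' \in \adv$, so that $r'$ is genuinely a run of $\I_{P,\gamma}$. The key observation is that in each of the crash and omissions contexts, membership in $\adv$ is governed by a bound $t$ on the number of agents for which the relevant perturbation functions deviate from the identity, and in passing from $F$ to $F'$ we have only made the functions associated with $i$ and $j$ more correct. Concretely, if $F$ witnesses its bound via a set $B$ of at most $t$ agents, outside of which $\Delta^s,\Delta^t,\Delta^r$ are correct as dictated by the model, then $B \setminus \{i,j\}$ witnesses the same bound for $F'$, since turning off the faults of $i$ and $j$ introduces no new deviations for any other agent. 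Hence $F' \in \adv$. With (a), (b) and (c) all in hand, Proposition~\ref{prop:subsetSBA} applies and yields $\I_{P,\gamma} \models \text{SBA}(\N)$ implies $\I_{P,\gamma} \models \text{SBA}(\A)$.
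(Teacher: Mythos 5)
Your proposal is correct and follows essentially the same route as the paper's proof: an application of Proposition~\ref{prop:subsetSBA} with $S=\N$ and $T=\A$, where conditions (b) and (c) are immediate and condition (a) is established by switching to a run whose adversary agrees with the original up to time $m$ but makes $i$ and $j$ fault-free thereafter, so that the common prefix gives $(r,m)\sim_i(r',m)$ and $(r,m)\sim_j(r',m)$. Your explicit verification that the modified adversary remains in $\adv$ (since removing the faults of $i$ and $j$ only shrinks the witnessing set of faulty agents, preserving the bound $t$ in each crash and omissions model) is a detail the paper leaves implicit, and is a welcome addition rather than a divergence.
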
 

\begin{proof} 
The result follows using Proposition~\ref{prop:subsetSBA} with $S = \N$ and $T = \A$.
Condition (b) in Proposition~\ref{prop:subsetSBA} follows from the definitions of $\N$ and $\A$.
Condition (c) is direct, by assumption.
We show that condition (a) of Proposition~\ref{prop:subsetSBA} 
holds for these failure models. Suppose $\I_{P,\gamma},(r,m) \models i \in \A \land j \in \A$. 
Let $r'$ be the run that is identical to $r$ to time $m$, but in which the adversary is modified so that agents $i$ and $j$ 
never fail after time $m$. Since these agents did not have a failure in run $r$ before time $m$ either, we have 
$\I_{P,\gamma}, (r',m) \models i \in \N \land j \in \N$ as required. Because runs are determined by their initial states, 
the protocol $P$ and the adversary, there is no difference  between $r$ and $r'$ in the adversary 
before time $m$, we have $(r,m) \sim_i (r',m)$ and $(r,m) \sim_j (r',m)$ in particular. 
\end{proof} 

Thus, we have SBA($\N$) is equivalent to SBA($\A)$ in crash and omission failure models when $\N \neq \emptyset$ is valid.
While SBA($\N$)  requires only the nonfaulty agents to decide simultaneously, in fact, 
the stronger statement that all nonfailed agents act simultaneously is implied by this specification.

For a set $S$, write $\decides_S(v)$ for $\bigwedge_{i\in S} \decides_i(v)$.
Dwork and Moses \cite{DM90} Theorem 8 states that for any SBA protocol $P$ for the crash failures model, 
 $\decides_i(v) \rimp \ck{\A}(\decides_\A(v))$ and  $\decides_i(v) \rimp \ck{\A}(\exists v)$ are valid in $\I_{P,\exchange,\Crash_t}$. 
 The specification of SBA is stated informally in the introduction of the paper using the term ``nonfaulty'' 
 but it is not made precise in the paper whether this should be interpreted as referring to the set $\N$ of agents that never fail, 
 or the set $\A$ of active agents, that have not yet failed. The proof of Theorem 8 appears to be using $\A$ as 
 the interpretation. However, the result can also be established using the apparently weaker interpretation $\N$, 
 as shown in the following result.  A second subtlety is that the proof depends on the fact that 
 crash failures have been modelled using the hard crash failures model, 
 so that crashed agents are in a special state $\crashed$, 
 with the property that $P_i(\crashed) = \noop \neq \decide_i(v)$ for all values $v$. 
 
 \begin{proposition} \label{prop:dmCK} 
 Suppose that $P$ is a protocol for the hard crash failures context $(\exchange,\Crash_t)$ with $t<n$ 
 such that $\I_{P,\exchange,\Crash_t} \models \text{SBA}(\N)$. Then 
 $\decides_i(v) \rimp \ck{\A}(\decides_\A(v))$ and  $\decides_i(v) \rimp \ck{\A}(\exists v)$ are valid in $\I_{P,\exchange,\Crash_t}$. 
 \end{proposition}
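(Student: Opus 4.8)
The plan is to reduce the claim to the specification $\text{SBA}(\A)$ and then propagate a decision invariant along the reachability relation underlying $\ck{\A}$. First I would observe that, since $t<n$, at least one agent never crashes, so $\N \neq \emptyset$ is valid in $\I_{P,\exchange,\Crash_t}$, and hence so is $\A \neq \emptyset$ because $\N \subseteq \A$ is valid. Corollary~\ref{cor:sbaNAcrash} then upgrades the hypothesis $\I_{P,\exchange,\Crash_t} \models \text{SBA}(\N)$ to $\I_{P,\exchange,\Crash_t} \models \text{SBA}(\A)$, so that Simultaneous-Agreement($\A$) and Validity($\A$) are both available.

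The key auxiliary fact I would establish is that $\decides_i(v) \rimp i \in \A$ is valid. In the hard crash model a crashed agent is in the distinguished state $\crashed$ with $P_i(\crashed) = \noop \neq \decide_i(v)$, and the crashed state is absorbing; so if $P_i(r_i(m)) = \decide_i(v)$ then $r_i(m) \neq \crashed$, and since any fault in rounds $1,\ldots,m$ would have forced $r_i(m) = \crashed$, agent $i$ has had no fault up to time $m$, i.e.\ $i \in \A(r,m)$. In particular, whenever $\decides_i(v)$ holds, Simultaneous-Agreement($\A$) gives $\decides_\A(v)$ at that point, which is the base case for the reachability argument.

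For $\decides_i(v) \rimp \ck{\A}(\decides_\A(v))$ I would use the semantic characterization of common knowledge: it suffices to show that $\decides_\A(v)$ holds at every point $(r',m')$ with $(r,m) \sim^*_\A (r',m')$. Since the system is synchronous, $m'=m$ throughout, and I would proceed by induction on the length of the $\sim^*_\A$-path, the base case being $(r,m)$ itself. For the inductive step, suppose $\decides_\A(v)$ holds at $(\bar r,m)$ and $(\bar r,m) \sim_\A (r',m)$ is witnessed by some $k \in \A(\bar r,m)$. The invariant at $(\bar r,m)$ gives $\decides_k(v)$, hence $P_k(\bar r_k(m)) = \decide_k(v)$; as $\bar r_k(m) = r'_k(m)$, this yields $\decides_k(v)$ at $(r',m)$. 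The same local-state equality together with the absorbing-$\crashed$ argument gives $r'_k(m) = \bar r_k(m) \neq \crashed$, so $k \in \A(r',m)$, and Simultaneous-Agreement($\A$) applied at $(r',m)$ propagates $\decides_\A(v)$.

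For $\decides_i(v) \rimp \ck{\A}(\exists v)$ I would argue that $\decides_\A(v) \rimp \exists v$ is valid: at any point some $k \in \A$ exists (as $\A \neq \emptyset$ is valid), and $\decides_k(v)$ together with Validity($\A$) yields $\exists v$. Monotonicity of $\ck{\A}$ then turns the already-established $\ck{\A}(\decides_\A(v))$ into $\ck{\A}(\exists v)$. The main obstacle, and the point on which the hard crash model is essential, is the equivalence between $r_k(m) \neq \crashed$ and $k \in \A(r,m)$: this is exactly what lets membership in $\A$ be transferred across the indistinguishability edges $\sim_k$, and it is what would fail in the communications crash model, where a crashed agent's local state is unchanged and active status is therefore not recoverable from $r_k(m)$.
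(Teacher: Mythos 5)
Your proof is correct and follows essentially the same route as the paper's: upgrading $\text{SBA}(\N)$ to $\text{SBA}(\A)$ via Corollary~\ref{cor:sbaNAcrash}, using the hard-crash fact that $r_i(m) \neq \crashed$ iff $i \in \A(r,m)$ to transfer $\A$-membership and decisions across $\sim_j$ edges, and propagating $\decides_\A(v)$ along the $\ck{\A}$ reachability relation (your path induction is just the semantic form of the paper's $\ek{\A}$-induction). The only cosmetic differences are that you derive $\ck{\A}(\exists v)$ from Validity($\A$) and $\A \neq \emptyset$ where the paper invokes Validity($\N$) with $\emptyset \neq \N \subseteq \A$, and that you spell out the absorbing-$\crashed$ justification the paper states tersely.
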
 
 
 \begin{proof} 
 From Proposition~\ref{cor:sbaNAcrash}, we obtain from $\I_{P,\exchange,\Crash_t} \models \text{SBA}(\N)\land \N \neq \emptyset$ 
 that $\I_{P,\exchange,\Crash_t} \models \text{SBA}(\A)$. Suppose that $\I_{P,\exchange,\Crash_t} , (r,m) \models \decides_i(v)$.
 Then we cannot have that $r_i(m) = \crashed$, and thus $i \in \A(r,m)$. It follows from $\text{SBA}(\A)$ that 
  $\I_{P,\exchange,\Crash_t} , (r,m) \models \decides_\A(v)$. Let $j \in\A(r,m)$ and $(r,m) \sim_j (r',m)$. 
  Then $r_j(m) = r'_j(m) \neq \crashed$ so also $j \in\A(r',m)$ and  $\I_{P,\exchange,\Crash_t} , (r',m) \models \decides_j(v)$.
  Using $\text{SBA}(\A)$, we obtain $\I_{P,\exchange,\Crash_t} , (r',m) \models \decides_\A(v)$.This shows that for all agents $i$, 
  $\I_{P,\exchange,\Crash_t} \models\decides_i(v) \rimp \ek{\A} \decides_\A(v)$, which implies that 
 $\I_{P,\exchange,\Crash_t} \models\decides_\A(v) \rimp \ek{\A} \decides_\A(v)$. By induction, this gives 
  $\I_{P,\exchange,\Crash_t} \models\decides_\A(v) \rimp \ck{\A} \decides_\A(v)$, and we derive   $\I_{P,\exchange,\Crash_t} \models\decides_i(v) \rimp \ck{\A} \decides_\A(v)$.
  Next, it follows using  $\I_{P,\exchange,\Crash_t} \models \text{SBA}(\N)\land \emptyset \neq \N \subseteq \A$ and Validity($\N$) that 
  $\I_{P,\exchange,\Crash_t} \models\decides_i(v) \rimp \ck{\A} \exists v$.
 \end{proof} 
 
On this basis,  Dwork and Moses \cite{DM90} use the general knowledge-based program $\kbp(\Phi)$ in which agent $i$ operates as follows
\begin{equation} 
\begin{array}{l} 
\mbox{do $\noop$ until}~ \exists v\in V ( \Phi_{i,v});\\ 
\mbox{for $v$ the least value in $V$ for which}~\Phi_{i,v} ~\mbox{do}~\decide_i(v);\\ 
\mbox{do $\noop$ forever}
\end{array} \label{usimkbp} 
\end{equation} 
where $\Phi$ is a collection of (knowledge-based) formulas indexed by an agent $i$ and  a value
 $v \in \Values$.  
 Dwork and Moses work with the instance of $\kbp(\Phi)$ in which 
$\Phi_{i,v}$ is the  condition for each agent $i$ and possible choice $v\in \Values$
given by  $K_i \ck{\A}(\exists v)$, but we define the knowledge-based program more abstractly using $\Phi$, 
in order to consider alternatives.

A concrete protocol $P$ \emph{implements} $\kbp(\Phi)$ 
with respect to a context $\gamma$ if at all points $(r,m)$ of $\I_{P,\gamma}$, and all agents $i$, 
the action $P_i(r_i(m))$ is the same action as would be selected by agent $i$'s program in 
$\kbp(\Phi)$ at $(r,m)$, with $\Phi_{i,v}$ 
interpreted as true iff $\I_{P, \gamma},(r,m) \models \Phi_{i,v}$. 
 
A few points are worth noting concerning this definition. First, the quantification is over \emph{all} 
agents $i$, so both the faulty and non-faulty agents are required to behave as the knowledge-based program describes. This would not be appropriate in Byzantine settings,  where we cannot expect that 
malicious agents will behave as required. Moreover, even in benign settings, in order to satisfy the 
Unique Decision property, once a decision has been made, 
an SBA protocol needs to be able to deduce from the agent's local state that a decision has 
already been made, so that the appropriate action is $\noop$. 

In general, failure contexts 
in which the state perturbation function $\Delta^s$ may perturb agents' local states may 
interfere with faulty agents' ability to satisfy the Unique Decision property. 
Satisfying Unique Decisions generally requires some assumptions on the action of $\Delta^s$, 
such as the existence of a bit in agents' local states that records whether a decision has been made, 
that is not affected by $\Delta^s$. 

Because of these concerns, in some strong failure settings, 
it would be appropriate for the Unique Decision property to 
weakened to apply to non-faulty agents only, and for the definition of  knowledge-based program 
implementation to be similarly weakened. We leave an exploration of such weakening to be explored elsewhere, and focus in this paper on failure settings where the Unique Decision property 
can be satisfied for all agents.

As already noted, Dwork and Moses \cite{DM90} work with an instance of $\kbp(\Phi)$ in which 
$\Phi_{i,v}$ is the  condition for each possible choice $v\in \Values$
given by  $K_i \ck{\A}(\exists v)$. 
By contrast, Fagin et al \cite{FHMVbook} show that for an SBA($\A$) protocol, 
the formula $\decide_i(v) \rimp \bel{\A}_i \cb{\A} \exists v$ is valid in $\I_{P,\exchange,\failures}$.
On the basis of this, they use $\Phi_{i,v} = \bel{\A}_i \cb{\A} \exists v$ in the  knowledge based program $P(\Phi)$.% 
\footnote{The use of a common belief rather than a common knowledge formula originates in the 
work of Moses and Tuttle \cite{MT88}, but they show just that $i \in \N \land \decide_i(v) \rimp \cbn \exists v$ is valid, and 
write a program in which the condition ``test for  $\cb{\N} \exists v$'' is used. This
work predated the formal definition of knowledge based programs, which requires that the conditions of the program 
be local to an agent. The treatment of \cite{FHMVbook} is therefore more satisfactory.}
In fact, this result holds more generally: 

\begin{lemma} \label{lem:belcbn}
Let $S$ be an indexical set  of agents and suppose that
 $P$ an SBA($S$) protocol for 
an information exchange protocol $\exchange$ and failure environment $\failures$. 
Then the formula $ \decide_i(v) \rimp \bel{S}_i \cb{S} \exists v$ is valid in $\I_{P,\exchange,\failures}$.
\end{lemma}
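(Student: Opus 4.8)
The plan is to unfold the belief and common-belief operators through their semantic characterizations and then apply Simultaneous-Agreement($S$) and Validity($S$) directly: the first to propagate a decision across $\approx^*_S$-reachable points, and the second to obtain $\exists v$ at each of them.

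First I would fix a point $(r,m)$ of $\I_{P,\exchange,\failures}$ with $\I_{P,\exchange,\failures},(r,m) \models \decides_i(v)$ and aim to derive $\bel{S}_i \cb{S} \exists v$ there. Since $\bel{S}_i \cb{S} \exists v$ abbreviates $K_i(i \in S \rimp \cb{S} \exists v)$, it suffices to take an arbitrary point $(r',m') \sim_i (r,m)$ with $i \in S(r',m')$ and show $\cb{S} \exists v$ at $(r',m')$. Because $\sim_i$ preserves agent $i$'s local state and the protocol action $P_i$ is a function of that local state alone, agent $i$ also performs $\decide_i(v)$ at $(r',m')$; combined with $i \in S(r',m')$ and Simultaneous-Agreement($S$), this means every $j \in S(r',m')$ performs $\decide_j(v)$ at $(r',m')$.

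The heart of the argument is a closure claim for the semantic characterization of $\cb{S}$. Let $X$ be the set of points $(r'',m'')$ at which $S(r'',m'')$ is nonempty and every $j \in S(r'',m'')$ performs $\decide_j(v)$. I would show $X$ is closed under $\approx_S$: if $(r_1,m_1) \in X$ and $(r_1,m_1) \approx_S (r_2,m_2)$ with witness $k \in S(r_1,m_1) \cap S(r_2,m_2)$ and $(r_1,m_1) \sim_k (r_2,m_2)$, then $k$ performs $\decide_k(v)$ at $(r_1,m_1)$ (as $k \in S(r_1,m_1)$ and $(r_1,m_1)\in X$), hence also at $(r_2,m_2)$, since $\sim_k$ preserves $k$'s local state and thus its action; as $k \in S(r_2,m_2)$, Simultaneous-Agreement($S$) propagates the decision to all of $S(r_2,m_2)$, giving $(r_2,m_2) \in X$. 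By transitivity $X$ is closed under $\approx^*_S$, and $(r',m') \in X$ by the previous paragraph, so every point $(r'',m'')$ with $(r',m') \approx^*_S (r'',m'')$ lies in $X$.

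Finally I would apply Validity($S$) at each such $(r'',m'')$: some agent $j \in S(r'',m'')$ performs $\decide_j(v)$, so some agent has initial preference $v$, whence $\I_{P,\exchange,\failures},(r'',m'') \models \exists v$. Since $\cb{S}\exists v$ holds at a point precisely when $\exists v$ holds at every $\approx^*_S$-reachable point, this yields $\cb{S}\exists v$ at $(r',m')$, and therefore $\bel{S}_i \cb{S} \exists v$ at $(r,m)$. The step I expect to be the main obstacle is the closure claim, where the locality of the protocol (so that the $\sim_k$-bridge transports the witness agent's decision) must be combined with Simultaneous-Agreement($S$) (which then spreads that decision across all of $S$); the remaining steps are routine unfolding of the modal operators.
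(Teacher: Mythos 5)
Your proof is correct, and it reaches the paper's conclusion by a semantic route where the paper argues syntactically. The paper first establishes the validity of $\decides_i(v) \rimp \bel{S}_i \cb{S}\decides_S(v)$ by iterating $\decides_S(v) \rimp \eb{S}\decides_S(v)$, and then runs a second induction to show $(S \neq \emptyset \land \cb{S}\decides_S(v)) \rimp \cb{S}\exists v$, threading the conjunct $S \neq \emptyset$ through the fixed point so that Validity($S$) can be applied inside the modality. You instead work directly with the reachability characterization of $\cb{S}$ via $\approx^*_S$: you define the set $X$ of points where $S$ is nonempty and every member of $S$ decides $v$, prove $X$ closed under $\approx_S$, and apply Validity($S$) pointwise at each reachable point. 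The combinatorial core is identical in both arguments --- locality of $P_k$ transports the witness agent's decision across a $\sim_k$ link, and Simultaneous-Agreement($S$) then spreads it over $S(r_2,m_2)$ --- and your inclusion of nonemptiness in the definition of $X$ (witnessed by $k \in S(r_2,m_2)$) does exactly the work of the paper's $S \neq \emptyset$ conjunct in its second induction. What your version buys is a single closure argument in place of two modal inductions, with Validity($S$) applied only at the end rather than woven into a fixed-point step; what the paper's version buys is that it stays entirely within the modal calculus, making the intermediate validity $\decides_i(v) \rimp \bel{S}_i\cb{S}\decides_S(v)$ available as a reusable fact, and it does not lean on the semantic characterization of $\cb{S}$ in terms of $\approx^*_S$, which the paper states without proof and which your argument takes as its foundation.
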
 

\begin{proof} 
For brevity, we write $\I$ for $\I_{P,\exchange,\failures}$. 
We first show that $\decides_S(v) \rimp \cb{S} \decides_S(v)$ is valid in $\I$.
Suppose that  $\I,(r,m) \models \decides_i(v)$. 
Suppose $(r,m) \sim_i (r',m)$ where $i \in S(r',m)$. Then
 $P_i(r'_i(m)) =  P_i(r_i(m)) = \decide_i(v)$. Since $P$ is an SBA($S$) protocol and $i \in S(r',m)$, it follows 
by Simultaneous-Agreement($S$) that 
$\I,(r',m) \models \decides_S(v)$.  This shows that $\I \models \decide_i(v) \rimp \bel{S}_i \decide_S(v)$. 
Since this holds for all $i$, it follows that $\decides_S(v) \rimp \eb{S} \decides_S(v)$ is valid in $\I$. It follows by induction that 
$\decides_S(v) \rimp \cb{S} \decides_{S}(v)$ is valid in $\I$. 
By validity of $ \decide_i(v) \rimp \bel{S}_i \decide_S(v)$, 
we also have that 
$\decides_i(v) \rimp  \bel{S}_i \cb{S} \decides_{S}(v)$ is valid.

Next, notice that $(S\neq \emptyset \land \cb{S}\decide_S(v)) \rimp \eb{S} (S\neq \emptyset \land \cb{S}\decide_S(v) \land \exists v)$ is valid in $\I$. 
This is because (i) $\bel{S}_i(S\neq \emptyset)$ is valid by definition of  $\bel{S}_i$, 
(ii) $\cb{S}\phi \rimp \eb{S} \cb{S} \phi$ is valid for all $\phi$, and because (iii) $(S\neq \emptyset \land \decide_S(v)) \rimp \exists v$ is valid in $\I$ 
by Validity(S). By induction, we conclude that $(S\neq \emptyset \land \cb{S}\decide_S(v)) \rimp \cb{S} \exists v$ is valid in $\I$. 

Combining this conclusion with the validity of 
$\decides_i(v) \rimp  \bel{S}_i \cb{S} \decides_{S}(v)$ from the first paragraph, 
and the validity of $\bel{S}_i(S\neq \emptyset)$, 
we obtain the validity of $ \decide_i(v) \rimp \bel{S}_i \cb{S} \exists v$. 
\end{proof} 

Using this result, we may show the correctness of the knowledge based program for 
the specification $SBA(S)$, for  general information 
exchanges, failure models, and choices of indexical set $S$, subject to a constraint on $S$. 
We say that $S$ is \emph{monotonically decreasing} if for all runs $r$ and times $m> m'$, 
we have $S(r,m) \subseteq S(r,m')$. This is plainly satisfied by $S= \A$ and $S = \N$. 

\begin{proposition} \label{prop:implementationIsSBA}
Let the indexical set $S$ be monotonically decreasing. 
Let protocol $P$ be an implementation of the knowledge based program $\kbp(\Phi)$ 
with respect to an information exchange $\exchange$ and failure model $\failures$, where 
$\Phi_{i,v} = \bel{S}_i \cb{S} \exists v$. Then $\I_{P,\exchange,\failures} \models SBA(S)$.
\end{proposition}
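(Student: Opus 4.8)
The plan is to verify the three conjuncts of $SBA(S)$ separately, putting almost all the work into Simultaneous-Agreement$(S)$. Throughout write $\I$ for $\I_{P,\exchange,\failures}$, and for each agent $k$ and run $r$ let $\tau_k(r)$ be the first time $t$ with $\I,(r,t)\models\bigvee_v\Phi_{k,v}$ (and $\tau_k(r)=\infty$ if there is none). By the control flow of $\kbp(\Phi)$ the action it prescribes at $(r,m)$ is $\decide_k(v)$ exactly when $m=\tau_k(r)$ and $v$ is the least value with $\I,(r,m)\models\Phi_{k,v}$, and is $\noop$ at every other time; since $P$ implements $\kbp(\Phi)$ in a setting where this control flow can be read off the local state, the same holds for $P_k$. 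Unique-Decision is then immediate: along any run agent $k$ performs a $\decide_k(\cdot)$ action only at the single time $\tau_k(r)$, hence at most once.

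The technical core is a single observation about which values are triggered: for every point $(r,m)$ and value $w$, if some $k\in S(r,m)$ has $\I,(r,m)\models\Phi_{k,w}$, then $\I,(r,m)\models\Phi_{\ell,w}$ for every $\ell\in S(r,m)$. To prove it I would unfold $\Phi_{k,w}=\bel{S}_k\cb{S}\exists w=K_k(k\in S\rimp\cb{S}\exists w)$; since $\sim_k$ is reflexive and $k\in S(r,m)$ this yields $\I,(r,m)\models\cb{S}\exists w$. The fixed-point identity $\cb{S}\phi\equiv\eb{S}\cb{S}\phi$ then gives $\I,(r,m)\models\eb{S}\cb{S}\exists w=\bigwedge_{\ell\in S(r,m)}\bel{S}_\ell\cb{S}\exists w$, i.e.\ $\I,(r,m)\models\Phi_{\ell,w}$ for each $\ell\in S(r,m)$. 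In words: at any point, all agents currently in $S$ have exactly the same set of triggered values, hence compute the same least triggered value. Validity$(S)$ falls out of the same derivation: if $i\in S(r,m)$ decides $v$ then $\I,(r,m)\models\cb{S}\exists v$, and since $i\in S(r,m)$ gives $S(r,m)\neq\emptyset$, one further unfolding $\cb{S}\exists v\rimp\eb{S}\exists v\rimp\bel{S}_i\exists v$ together with $i\in S(r,m)$ yields $\I,(r,m)\models\exists v$, so some agent $k$ has $\init_k=v$.

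For Simultaneous-Agreement$(S)$, assume $i\in S(r,m)$ decides $v$ at $(r,m)$, i.e.\ $\tau_i(r)=m$ with $v$ least, and fix $j\in S(r,m)$. The inequality $\tau_j(r)\le m$ is immediate: the value-set claim at $(r,m)$ gives $\I,(r,m)\models\Phi_{j,v}$. For $\tau_j(r)\ge m$, suppose for contradiction that $j$ triggers some $w$ at a time $m'=\tau_j(r)<m$. This is precisely where monotone decreasingness of $S$ enters: since $m'<m$ we have $S(r,m)\subseteq S(r,m')$, so both $i$ and $j$ lie in $S(r,m')$, and applying the claim at $(r,m')$ to $j$ yields $\I,(r,m')\models\Phi_{i,w}$, forcing $\tau_i(r)\le m'<m$ and contradicting $\tau_i(r)=m$. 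Hence $\tau_j(r)=m$, so $j$ decides at time $m$; and because $i,j\in S(r,m)$ trigger the identical set of values at $(r,m)$ (the claim applied in both directions), they select the same least value, namely $v$. Thus $\I,(r,m)\models\decides_j(v)$, as required.

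The main obstacle is the simultaneity-in-time step. Agreement on the \emph{value} is a clean consequence of the common-belief fixed point $\cb{S}\phi\equiv\eb{S}\cb{S}\phi$, but ruling out that some agent of $S(r,m)$ decided strictly earlier depends essentially on monotone decreasingness (to place both $i$ and $j$ in $S(r,m')$ at the earlier time $m'$) and on $P$ faithfully reflecting the program's ``decide at the first trigger, then $\noop$ forever'' behaviour. Without monotonicity the claim could fail to apply at $m'$, and an agent leaving and re-entering $S$ could, in principle, desynchronise the decision times.
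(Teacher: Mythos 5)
Your proposal is correct and takes essentially the same approach as the paper: both rest on the observation that $\bel{S}_i\cb{S}\exists v$ together with $i\in S(r,m)$ yields $\cb{S}\exists v$, whence the fixed point $\cb{S}\phi\equiv\eb{S}\cb{S}\phi$ propagates the trigger $\bel{S}_j\cb{S}\exists v$ to every $j\in S(r,m)$, and both invoke monotone decreasingness of $S$ in exactly the same place, to put $i$ and $j$ together in $S(r,m')$ and rule out a strictly earlier decision. The only difference is cosmetic: the paper organizes the simultaneity argument around the earliest decision time in the run, whereas you work with per-agent first-trigger times $\tau_k$ and derive a contradiction, which amounts to the same argument.
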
 

\begin{proof} 
Let $\I = \I_{P,\exchange,\failures}$. 
Satisfaction of Unique-Decision is immediate from the structure of $\kbp(\Phi)$, since agent $i$ always 
performs $\noop$ after the first time, if any, it performs an action $\decide_i(v)$. 

For Simultaneous-Agreement($S$), consider a run $r$ of $\I$
and let $m$ be the earliest time at which any agent $i\in S(r,m)$ performs $\decide_i(v)$, for some $v$. 
Since $P$ implements  $\kbp(\Phi)$, we have that $v$ is the least value for 
which $\I,(r,m) \models \bel{S}_i \cb{S} \exists v$. 
Since $i\in S(r,m)$, it follows that $\I,(r,m) \models \bel{S}_i \cb{S} \exists v$, 
and hence that $\I,(r,m) \models \bel{S}_j \cb{S} \exists v$ for all $j\in S(r,m)$. 
Since $P$ implements  $\kbp(\Phi)$, this implies that all $j\in S(r,m)$, that have 
not previously decided, also decide on some value at time $m$. 
Indeed, for all agents $j\in S(r,m)$, the value $v$ must be the least value $v'$ 
such that $\I,(r,m) \models \bel{S}_j \cb{S} \exists v'$, for else, by the same argument,  
$\I,(r,m) \models \bel{S}_i \cb{S} \exists v'$ also, contradicting the choice of $v$. 
Thus, all agents $j\in S(r,m)$ that have not previously decided, perform $\decide_j(v)$ at time $m$. 
Finally, note that no agent $j\in S(r,m)$ can have decided at a time $m'<m$, for then
we would have that $i \in S(r,m) \subseteq S(r,m')$ would decide at time $m'$ instead, by 
a similar argument. Thus, in fact, all agents $j\in S(r,m)$ decide at time $m$. 

For Validity, suppose that  $i\in S(r,m)$ performs $\decide_i(v)$ at time $m$ in run $r$ of $\I$. 
Then $\I,(r,m) \models \bel{S}_i \cb{S} \exists v$. Since $i\in S(r,m)$, it follows that 
$\I,(r,m)\models \exists v$. 
\end{proof}

Beyond the use of $\ck{\A}$ instead of $\cb{\A}$ to characterize the conditions for an agent to decide, 
a further difference in the results of \cite{DM90} and \cite{FHMVbook} is the 
modelling of crash failures. Whereas \cite{DM90} uses the hard crash model, \cite{FHMVbook} uses the communication crash model. 
We now clarify the connection between these characterizations: in hard crash contexts, the two characterizations 
are equivalent. 

\begin{proposition} \label{prop:dwmt}
If $P$ is an SBA($\N$) protocol for the hard crash context $(\exchange,\Crash_t)$ with $t<n$ 
then 
$\ck{\A}(\decides_\A(v)) \dimp \cb{\N}(\decides_\N(v))$
and $i\in \A \rimp (K_i \ck{\A}(\decides_\A(v)) \dimp \beln_i \cb{\N}(\decides_\N(v))$
are valid in $\I_{P,\exchange,\Crash_t}$. 
\end{proposition}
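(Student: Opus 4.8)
The plan is to prove each biconditional by splitting it into two implications and leaning on the characterizations already established. I would handle the first biconditional, $\ck{\A}(\decides_\A(v)) \dimp \cb{\N}(\decides_\N(v))$, first, since the second reduces to it. For the forward implication I would chain three facts: Proposition~\ref{prop:k:b} gives $\ck{\A}\phi \rimp \cb{\A}\phi$; since $\N \subseteq \A$ is valid, Proposition~\ref{prop:ck:contain} gives $\cb{\A}\phi \rimp \cb{\N}\phi$; and since $\N \subseteq \A$ is valid, $\decides_\A(v) \rimp \decides_\N(v)$ holds at every point, so the monotonicity of $\cb{\N}$ turns $\cb{\N}(\decides_\A(v))$ into $\cb{\N}(\decides_\N(v))$. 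Composing these yields the forward implication, which is valid in general and uses nothing about crashes.

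For the converse I would invoke Proposition~\ref{prop:dmCK}. Because $t<n$, the set $\N$ is never empty, so $\cb{\N}(\decides_\N(v))$ entails $\decides_\N(v)$ at the point in question (using that $\cb{S}\phi \rimp \phi$ is valid when $S \neq \emptyset$ is), whence some genuinely nonfaulty agent $j$ satisfies $\decides_j(v)$ there. Proposition~\ref{prop:dmCK} then supplies $\decides_j(v) \rimp \ck{\A}(\decides_\A(v))$, giving $\ck{\A}(\decides_\A(v))$ at that point and completing the first biconditional.

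For the second biconditional I would fix a point with $i\in\A$. The forward direction is cheap: $K_i$ is monotone, so applying the (generally valid) implication $\ck{\A}(\decides_\A(v)) \rimp \cb{\N}(\decides_\N(v))$ inside $K_i$ converts $K_i\ck{\A}(\decides_\A(v))$ into $K_i\cb{\N}(\decides_\N(v))$, which implies $\bel{\N}_i\cb{\N}(\decides_\N(v))$ by Proposition~\ref{prop:k:b}; this direction does not even need $i\in\A$. The converse is where the hard crash structure is essential, and I expect it to be the main obstacle. Given $\bel{\N}_i\cb{\N}(\decides_\N(v))$ at $(r,m)$ with $i\in\A(r,m)$, I must show $\ck{\A}(\decides_\A(v))$ at every $(r',m)\sim_i(r,m)$. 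The plan is: first observe that in the hard crash model membership in $\A$ is locally determined, namely $i\in\A(r',m)$ iff $r'_i(m)\neq\crashed$, so that $i\in\A(r',m)$; then, exactly as in Corollary~\ref{cor:sbaNAcrash}, build a run $r''$ agreeing with $r'$ through time $m$ but in which $i$ never fails afterwards, giving $(r',m)\sim_i(r'',m)$ and $i\in\N(r'',m)$. Since $(r,m)\sim_i(r'',m)$, the hypothesis together with $i\in\N(r'',m)$ yields $\cb{\N}(\decides_\N(v))$ at $(r'',m)$, hence $\ck{\A}(\decides_\A(v))$ there by the first biconditional. Finally I would transfer this back: because $i\in\A(r'',m)$ and $(r'',m)\sim_i(r',m)$, we have $(r'',m)\sim_\A(r',m)$ in one step, so every point $\sim^*_\A$-reachable from $(r',m)$ is $\sim^*_\A$-reachable from $(r'',m)$, whence $\decides_\A(v)$ holds throughout and $\ck{\A}(\decides_\A(v))$ holds at $(r',m)$.

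The delicate point, and the reason the proposition is confined to the hard crash context, is the local determinacy of $\A$: it is precisely what lets me conclude $i\in\A(r',m)$ from $i\in\A(r,m)$ along the $\sim_i$-edge, and so legitimately run the ``make $i$ never fail'' construction at the \emph{witness} point $(r',m)$ rather than only at $(r,m)$. In the communication crash model this step fails, since an active agent cannot in general detect from its own state whether its transmitter has crashed, so $\A$-membership is not $\sim_i$-invariant. I would flag this explicitly as the place where the assumption that crashed agents occupy the distinguished state $\crashed$ is used.
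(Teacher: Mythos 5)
Your proof is correct and, for the first equivalence, coincides exactly with the paper's: the forward direction chains Proposition~\ref{prop:k:b}, Proposition~\ref{prop:ck:contain} (via validity of $\N\subseteq\A$) and monotonicity of $\cb{\N}$, and the converse extracts a deciding agent in $\N$ from $\cb{\N}(\decides_\N(v))$ using $\N\neq\emptyset$ (guaranteed by $t<n$) and closes with Proposition~\ref{prop:dmCK}. Where you differ is the second equivalence: the paper compresses it into one line, citing only the validity of $K_i\phi\rimp\beln_i\phi$ and of $i\in\A\rimp K_i(i\in\A)$ in the hard crash system, and as you correctly sense, those two facts alone do not bridge the gap between the guard $i\in\N$ inside $\beln_i$ and the $i\in\A$ information available at $\sim_i$-related points, since an active agent may still be faulty. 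Your adversary-surgery step --- building $r''$ agreeing with $r'$ up to time $m$ but with $i$ never failing thereafter, exactly as in the proof of Corollary~\ref{cor:sbaNAcrash} (note the modified adversary only removes failures, so it stays within $\Crash_t$), then transferring $\ck{\A}(\decides_\A(v))$ back along the one-step $\sim_\A$ edge from $(r'',m)$ to $(r',m)$ --- is precisely the ingredient the paper leaves implicit. Two further points in your favour: routing the transfer through $\ck{\A}$ rather than trying to carry $\cb{\N}(\decides_\N(v))$ back to $(r',m)$ is essential, since $\approx_\N$ requires membership of $i$ in $\N$ at both endpoints and $i$ may lie in $\A\setminus\N$ at $(r',m)$ (consistent with the paper's remark after Proposition~\ref{prop:CBtestEquiv} that $\cb{S}\phi\dimp\cb{T}\phi$ is not established); and your identification of the local determinacy of $\A$-membership ($i\in\A(r',m)$ iff $r'_i(m)\neq\crashed$) as the hard-crash-specific step matches both the paper's own use of this fact in the proof of Proposition~\ref{prop:dmCK} and its remark, immediately after the proposition, that the equivalence breaks in the communications crash model.
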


\begin{proof} 
Suppose $\I_{P,\exchange,\Crash_t},(r,m) \models \ck{\A}(\decides_\A(v))$. Then we have  $\I_{P,\exchange,\Crash_t},(r,m) \models\cb{\A}(\decides_\A(v))$ by Proposition~\ref{prop:k:b}. Since $\N \subseteq \A$ is valid, it follows that $\I_{P,\exchange,\Crash_t},(r,m) \models\cb{\N}(\decides_\A(v))$, and also that $\decides_\A(v) \rimp \decides_\N(v)$ is valid. It follows 
 that $\I_{P,\exchange,\Crash_t},(r,m) \models\cb{\N}(\decides_\N(v))$. 

Conversely, suppose that  $\I_{P,\exchange,\Crash_t},(r,m) \models\cb{\N}(\decides_\N(v))$. 
Since $\N \neq 0$ is valid, we have $\I_{P,\exchange,\Crash_t},(r,m) \models \decides_i(v)$ for some $i\in \N(r,m)$.
By Proposition~\ref{prop:dmCK}, $\I_{P,\exchange,\Crash_t},(r,m) \models \ck{\A} \decides_\A(v)$. 

This shows validity of $\ck{\A}(\decides_\A(v)) \dimp \cb{\N}(\decides_\N(v))$.
Validity of the formula $i\in \A \rimp (K_i \ck{\A}(\decides_\A(v)) \dimp \beln_i \cb{\N}(\decides_\N(v))$
follows from this using the fact that $K_i \phi \rimp \beln_i \phi$ is valid, and that in the hard crash system 
$\I_{P,\exchange,\Crash_t}$, we have $i \in \A \rimp K_i (i \in \A)$. 
\end{proof} 

Proposition~\ref{prop:dwmt} establishes that, in hard crash contexts, the knowledge based program using 
$K_i \ck{\A}(\decides_\A(v))$ is equivalent to the knowledge based program using $\beln_i \cb{\N}(\decides_\N(v))$, 
since these formulas are equivalent for active agents, and agents that have crashed take no actions in either case. 
However, these knowledge based programs may behave differently in a ``communications crash'' model, 
where crashed agents continue to take actions, since then $i \in A \rimp K_i (i \in A)$ is no longer valid, and 
whether a crashed agent satisfies $K_i (i \not \in \A)$ depends on the information exchange.

Since the characterization of  \cite{FHMVbook} is more general, in the sequel, we work with their belief based decision condition in the knowledge based program $\kbp(\Phi)$,  
assume $\N\neq \emptyset$ is valid, and take SBA($\N$) to be meaning of the specification of SBA. 

However, we may also note that, similar to the equivalence at the level of the specification, the choice of 
$\N$ or $\A$ in the condition of the knowledge based program makes no difference to the semantics. 
Define a \emph{synchronous epistemic bisimulation} on $\I$ with respect to a set of atomic propositions 
$\Prop$ to be a relation $\approx$ such that  
whenever $(r,m) \approx (r',m')$, we have
\begin{itemize} 
\item $m = m'$, 
\item for all $p \in \Prop$, $\I,(r,m) \models p$ iff $\I,(r',m) \models p$, and 
\item for all $i \in \Agents$, $(r,m) \sim_i (r',m)$.  
\end{itemize} 

\begin{proposition} \label{prop:CBtestEquiv}
Suppose that $S$ and $T$ are indexical sets of agents  in an interpreted system $\I$, 
and let $\approx$ be a synchronous epistemic bisimulation on $\I$ with respect to $\Prop$ such that 
(a) $\I \models S\subseteq T$, and 
(b) for all points $(r,m)$ of $\I$ there exists a point $(r',m)$ such that $(r,m) \approx (r',m)$ and 
$S(r',m)= T(r,m)$. 
If $p \in \Prop$ then $\I \models \bel{S}_i \cb{S} p  \dimp \bel{T}_i\cb{T} p$. 
\end{proposition}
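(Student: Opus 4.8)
The plan is to prove the two implications of the biconditional separately, since they rest on quite different ingredients: the direction $\bel{T}_i\cb{T} p \rimp \bel{S}_i\cb{S} p$ is a pure consequence of the containment $S\subseteq T$, whereas the converse is where the bisimulation and hypothesis (b) do the real work. For the easy direction I would argue entirely from Proposition~\ref{prop:ck:contain} together with the monotonicity of the operators. Hypothesis (a) gives $\I \models S\subseteq T$, so $\cb{T} p \rimp \cb{S} p$ is valid; applying $\bel{T}_i$ monotonically yields $\bel{T}_i\cb{T} p \rimp \bel{T}_i\cb{S} p$, and a second appeal to Proposition~\ref{prop:ck:contain} (the clause $\bel{T}_i \chi \rimp \bel{S}_i \chi$) gives $\bel{T}_i\cb{S} p \rimp \bel{S}_i\cb{S} p$. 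Chaining these yields the implication, and notably this half never uses the bisimulation.

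For the converse $\bel{S}_i\cb{S} p \rimp \bel{T}_i\cb{T} p$ I would work with the semantic characterization of $\cb{\cdot}$ in terms of the closure relations $\approx^*_S$ and $\approx^*_T$. Since the three defining conditions of a synchronous epistemic bisimulation are symmetric, we may assume $\approx$ is symmetric (its symmetric closure is still a bisimulation and still satisfies (b)). Using (b) I fix, for each point $(x,l)$, a chosen partner $\sigma(x,l)=(x',l)$ with $(x,l)\approx(x',l)$ and $S(x',l)=T(x,l)$. The crux is the following transfer lemma, proved by induction on the length of the $\approx_T$-chain: if $(a,l) \approx^*_T (w,k)$ then $\sigma(a,l) \approx^*_S \sigma(w,k)$. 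The base case is reflexivity. For the inductive step, a single step $(u,j)\approx_T(w,k)$ is witnessed by some agent $g\in T(u,j)\cap T(w,k)$ with $(u,j)\sim_g(w,k)$; since $T(u,j)=S(\sigma(u,j))$ and $T(w,k)=S(\sigma(w,k))$, the agent $g$ lies in both $S$-sets, and chaining the per-agent indistinguishability supplied by the bisimulations $\sigma(u,j)\approx(u,j)$ and $(w,k)\approx\sigma(w,k)$ with the given $(u,j)\sim_g(w,k)$ yields $\sigma(u,j)\sim_g\sigma(w,k)$, hence $\sigma(u,j)\approx_S\sigma(w,k)$.

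To assemble the converse, assume $\I,(r,m)\models \bel{S}_i\cb{S} p$ and take any $(r'',m)\sim_i(r,m)$ with $i\in T(r'',m)$; the goal is $\I,(r'',m)\models \cb{T} p$. The partner $(r^*,m)=\sigma(r'',m)$ satisfies $i\in S(r^*,m)$ and $(r,m)\sim_i(r^*,m)$, so the belief hypothesis forces $\I,(r^*,m)\models \cb{S} p$. Now for any $(w,k)$ with $(r'',m)\approx^*_T(w,k)$, the transfer lemma gives $(r^*,m)=\sigma(r'',m)\approx^*_S\sigma(w,k)$, so $\cb{S} p$ at $(r^*,m)$ yields $p$ at $\sigma(w,k)$, and since $\sigma(w,k)\approx(w,k)$ with $p\in\Prop$, the bisimulation transfers $p$ to $(w,k)$. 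As $(w,k)$ was an arbitrary $\approx^*_T$-successor, $\cb{T} p$ holds at $(r'',m)$, and since $(r'',m)$ was an arbitrary $\sim_i$-successor inside $T$, we conclude $\bel{T}_i\cb{T} p$ at $(r,m)$.

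I expect the main obstacle to be getting the transfer lemma exactly right — in particular recognizing that the partner function $\sigma$, rather than arbitrary bisimilar points, must be threaded through the induction, so that each $\approx_T$-witness $g$ lands simultaneously in the $S$-sets of both chosen partners, and verifying that the per-agent indistinguishability genuinely chains across the two bisimulation edges and the $\approx_T$-edge. It is also worth flagging the asymmetric role of the hypotheses: (a) drives only the easy direction, whereas (b) (through $\sigma$) together with the full strength of the bisimulation drives the converse.
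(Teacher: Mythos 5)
Your proof is correct and takes essentially the same route as the paper: the easy direction is the identical appeal to Proposition~\ref{prop:ck:contain}, and your transfer lemma mapping $\approx^*_T$-chains to $\approx^*_S$-chains via the partners supplied by hypothesis (b) is exactly the chain-relabelling at the heart of the paper's argument, which merely phrases the hard direction contrapositively (following a single falsifying chain for $\cb{T} p$ and transferring $\neg p$) where you argue directly over all chains and transfer $p$ back across the bisimulation. The only cosmetic differences are your explicit partner function $\sigma$ and the symmetrization of $\approx$, which is in fact unneeded since each $\sim_g$ is already symmetric and the bisimulation clauses are biconditional.
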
 

\begin{proof} 
 We have $\I \models \cb{T} \phi \rimp \cb{S} \phi$ and hence 
$\I \models  (\bel{T}_i\cb{T} \phi) \rimp \bel{S}_i \cb{S} \phi$
by Proposition~\ref{prop:ck:contain}. 
For the converse, we prove $\I \models (\neg  \bel{T}_i\cb{T} \phi) \rimp \neg \bel{S}_i \cb{S} \phi$. 
Suppose that $\I,(r,m) \models  \neg \bel{T}_i \cb{T} \phi$. Then there exists a point 
$(r^0,m)\sim_i (r,m)$ such that $i\in T(r^0,m)$ and $\I,(r^0,m) \models \neg \cb{T} \phi_T$. 
Moreover, from the latter we have that there exists a sequence 
$ (r^0,m) \sim_{i_1} (r^1,m) \sim_{i_2} \ldots \sim_{i_k} (r^k,m)$ such that 
$\I,(r^k,m) \models \neg p$ and 
for $j = 1\ldots k$ we have $i_j \in T(r^{j-1},m)\cap T(r^{j},m)$. By the assumptions on $\approx$, 
there exists for each $j = 0 \ldots k$ a run $\rho^j$ of $\I$  such that $(r^j,m) \approx(\rho^j,m)$, 
and $S(\rho^j,m) = T(r^j,m)$. 
Since $(r^k,m) \approx (\rho^k,m)$ we obtain that 
$\I,(\rho^k,m) \models \neg p$. 
Also for $j = 1\ldots k$ we have $i_j \in T(r^{j-1},m)\cap T(r^{j},m) = S(r^{j-1},m)\cap S(r^{j},m) $.
It follows that $\I,(\rho^0,m) \models \neg \cb{S} p$. 

Moreover, 
$i \in T(r^0,m) = S(\rho^0,m)$, 
and because $(r^0,m) \approx(\rho^0,m)$, we have 
$(r^0,m) \sim_i(\rho^0,m)$. 
Because  $(r,m) \sim_i (r^0,m)$, we obtain $(r,m) \sim_i (\rho^0,m)$. 
It follows that $\I,(r,m)\models \neg \bel{S}_i \cb{S} p$. 
\end{proof}

We remark that the above proof does not show that $\I \models \cb{S}\phi \dimp \cb{T}\phi$.

\begin{corollary}
If $p$ is an atomic proposition that depends only on the local states of the agents, and 
$\failures$ is either a crash or omission failure model, then 
$\I_{P,\exchange,\failures} \models (\bel{\N}_i \cb{\N} p) \dimp \bel{\A}_i \cb{\A} p$
\end{corollary}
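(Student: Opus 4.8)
The plan is to obtain the result as an instance of Proposition~\ref{prop:CBtestEquiv}, taking $S = \N$, $T = \A$, and letting $\Prop$ be a set of atomic propositions depending only on the agents' local states that contains the given $p$. For the required bisimulation I would use the relation $\approx$ defined by $(r,m) \approx (r',m')$ iff $m = m'$ and $r_i(m) = r'_i(m)$ for every agent $i \in \Agents$; that is, the two points agree, at the same time, on the local state of every agent (but possibly not on the environment, and in particular not on the adversary component of the environment state).

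First I would check that $\approx$ is a synchronous epistemic bisimulation with respect to $\Prop$. The clause $m = m'$ holds by definition; the clause $(r,m) \sim_i (r',m)$ for all $i$ is immediate, since $\sim_i$ depends only on $r_i(m) = r'_i(m)$; and preservation of each $p \in \Prop$ follows because such $p$ depends only on the agents' local states, which are shared across $\approx$-related points. Condition~(a) of Proposition~\ref{prop:CBtestEquiv}, that $\I_{P,\exchange,\failures} \models \N \subseteq \A$, is the already-noted validity of $\N \subseteq \A$.

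The substance is condition~(b): for every point $(r,m)$ I must produce a point $(r',m)$ with $(r,m) \approx (r',m)$ and $\N(r',m) = \A(r,m)$. I would obtain $r'$ by run surgery, as in the proof of Corollary~\ref{cor:sbaNAcrash} but applied to all agents at once: let $r'$ be generated from the same initial global state and protocol $P$ by an adversary that agrees with that of $r$ on rounds $1,\ldots,m$, and that after round $m$ makes every agent of $\A(r,m)$ fault-free from round $m+1$ on, while letting every agent outside $\A(r,m)$ retain its behaviour from $r$ (this keeps crashed agents crashed in the crash models and is harmless in the omission models). Since the adversary and protocol coincide with those of $r$ up to round $m$, we get $r'_i(m) = r_i(m)$ for every agent $i$ (the adversary component of the environment may differ, but this does not affect any $\sim_i$ with $i \in \Agents$), so $(r,m) \approx (r',m)$. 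Agents in $\A(r,m)$ have no fault in rounds $1,\ldots,m$ and none afterwards, hence lie in $\N(r',m)$; agents outside $\A(r,m)$ have a fault in some round $\le m$ which persists in $r'$, hence they are faulty in $r'$; together these give $\N(r',m) = \A(r,m)$.

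The point demanding care — the main obstacle — is verifying that the surgically modified adversary is itself admissible in $\failures$. For the crash models this needs crash-persistence, which is respected because the agents I keep faulty are exactly those already crashed by time $m$, and the agents I make fault-free were active at time $m$. For all the listed models it also needs the fault bound: the faulty agents of $r'$ are precisely $\Agents \setminus \A(r,m)$, and each of these already failed by time $m$ in $r$ and so is among the at-most-$t$ faulty agents of $r$; hence $r'$ respects the bound and its adversary lies in $\adv$. With condition~(b) established, Proposition~\ref{prop:CBtestEquiv} applied to $S = \N$ and $T = \A$ yields $\I_{P,\exchange,\failures} \models (\bel{\N}_i \cb{\N} p) \dimp \bel{\A}_i \cb{\A} p$, as required.
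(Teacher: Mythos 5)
Your overall route is exactly the paper's: instantiate Proposition~\ref{prop:CBtestEquiv} with $S=\N$, $T=\A$, and discharge condition (b) by adversary surgery that keeps the adversary fixed through round $m$ and introduces no new faults afterwards; your admissibility check (crash persistence, fault bound) matches the paper's appeal to the construction used for Corollary~\ref{cor:sbaNAcrash}. The genuine gap is in your choice of $\approx$ and the resulting bisimulation check. You define $(r,m)\approx(r',m)$ by equality of the agents' local states \emph{at time $m$ only}, and claim preservation of each $p\in \Prop$ follows ``because $p$ depends only on the agents' local states.'' That reading is too weak for the hard crash model $\Crash_t$, and in particular for the instance $p=\exists v$ to which the paper applies this corollary immediately afterwards: $\exists v$ is determined by the local states at time $0$, and a hard crash replaces an agent's local state by the memoryless state $\crashed$, erasing $\init_i$. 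Concretely, take $n=2$, $t=1$, and let agent $2$ crash in round $1$ with all its messages dropped, with $\init_2=1$ in $r$ and $\init_2=0$ in $r'$, and $\init_1=0$ in both; then $r_i(1)=r'_i(1)$ for both agents (agent $2$ is in state $\crashed$, and agent $1$, having received nothing from agent $2$, has the same state in both runs), so your relation relates $(r,1)$ and $(r',1)$, yet $\exists 1$ holds at the first point and fails at the second. Hence your $\approx$ is not a synchronous epistemic bisimulation with respect to any $\Prop$ containing $\exists v$ in the hard crash model, and Proposition~\ref{prop:CBtestEquiv} cannot be invoked as you stand.

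The repair is small, and it is what the paper actually does: define $(r,m)\approx(r',m)$ to hold when every agent has the same \emph{initial} state in $r$ as in $r'$ and the adversaries of the two runs behave identically up to time $m$. This finer relation forces agreement of all agents' local states at all times up to $m$ (so it still gives $(r,m)\sim_i(r',m)$ for all $i\in\Agents$, and it preserves every proposition determined by the agents' local states, including $\exists v$, since those are fixed by the initial states and the adversary's behaviour to time $m$). Your condition-(b) witnesses --- same initial global state, same adversary through round $m$, no new faults afterwards --- already satisfy this finer relation, so the remainder of your argument, including the admissibility analysis, goes through verbatim.
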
 

\begin{proof} 
Define the relation $\approx$ on the points of $\I_{P,\exchange,\failures}$ by
$(r,m) \approx (r',m)$ if 
for all agents $i$, we have that $i$ has the same initial state in $r$ as in $r'$, and 
the behaviour of the adversary of $r$ up to time $m$ is the same as the 
the behaviour of the adversary of $r'$ up to time $m$.  
In particular, it follows from $(r,m) \approx (r',m)$ that we have $\A(r,m) = \A(r',m)$ and $(r,m) \sim_i (r',m)$ for all agents $i$.
If we take $S = \N$ and $T = \A$ then the assumptions of Proposition~\ref{prop:CBtestEquiv} are satisfied
with respect to $\approx$. 
In particular, note that we can obtain the run $r'$ required for condition (b) by changing the 
adversary so that there are no new faults after time $m$. The claim is then immediate.
\end{proof} 

Taking $p = \exists v$, we see that we can use either the formula $\bel{\N}_i \cb{\N} \exists v$ 
or $\bel{\A}_i \cb{\A} \exists v$ in the knowledge based program, without changing its semantics.

\section{Optimality with Respect to Limited Information Exchange} \label{sec:optimality} 

We now turn to the question of optimality of SBA protocols with respect to limited information exchange. 
In view of the results of the previous section, we may work with indexical set $S$ equal to 
the set $\N$ of non-faulty agents, and take $\Phi_{i,v} = \bel{\N}_i \cb{\N} \exists v$ in the 
knowledge-based program $\kbp(\Phi)$.

The literature has concentrated on implementations $P$ of the knowledge based program $\kbp(\Phi)$ with 
respect the full information exchange, because it can be shown that such implementations $P$ are 
an optimum, in the sense that for every SBA protocol $P'$ using any other information exchange, 
in every run the nonfaulty agents decide using $P$  no later than they would  in the corresponding run of $P'$. 
Here, a run $r$ of $P$ is said to \emph{correspond} to a run $r'$ of $P'$ if they have the same initial global state, 
hence the same adversary and initial states of all the agents. 

Since the full information protocol $P$ may be impractical or even require agents to perform intractable computations,
we are interested in alternative limited information exchanges. However, having selected an information exchange, 
it is still desirable to use a protocol that is optimal amongst those that  use the same information exchange. 
In this section, we consider whether the knowledge based program $\kbp(\Phi)$ 
yields such implementations. We show that this is the case in two distinct senses, subject to some assumptions
about the information exchange. 

In order to fairly compare two decision protocols relative to an information exchange, it helps to assume that the information 
exchange does not explicitly transmit  information about what decisions have been taken. 
Say that an information exchange protocol $\exchange$ with action sets $A_i= \{\noop\} \cup \{\decide_i(v)|v \in \Values\}$
\emph{does not transmit decision information} if for all agents $i$, local states $s\in L_i$, and actions $\decide_i(v_1), \decide_i(v_2) \in A_i$, we have
\begin{itemize} 
\item  $\mu_i(s,\decide_i(v_1)) = \mu_i(s,\decide_i(v_2))$, and 
\item for all message vectors $m$,  we have 
$\delta_i(s,\decide(v_1),m) = \delta_i(s,\decide(v_2),m)$. 
\end{itemize}

We work with the following  order on decision protocols: 
$P' \leq_{\exchange,\failures} P$ if for all runs $r'$ of $\I_{P',\exchange,\failures}$, and all agents $i$, if agent $i$ decides in round $m$ 
in run $r'$, then in the corresponding run $r$ of  $\I_{P,\exchange,\failures}$, agent $i$ decides no earlier than round $m$ (or not at all). 
An SBA protocol $P$ is \emph{optimal} with respect to an information exchange $\exchange$ and failure model $\failures$, if 
for all SBA protocols $P'$ with respect to  $\exchange$ and $\failures$, if $P' \leq_{\exchange,\failures} P$ then $P \leq_{\exchange,\failures} P'$. 
That is, there is no SBA protocol $P'$ that decides no later than $P$, and sometimes decides earlier.

\begin{theorem} \label{thm:sbaopt}
Suppose the information exchange $\exchange$ is synchronous and does not transmit decision information,
and that the protocol $P$ implements $\kbp(\Phi)$ with respect to information exchange $\exchange$ and failure model $\failures$.
Then $P$ is an optimal SBA protocol with respect to information exchange $\exchange$ and failure model $\failures$.
\end{theorem}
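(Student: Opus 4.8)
The plan is to prove optimality directly from its definition: fix an arbitrary SBA protocol $P'$ for $(\exchange,\failures)$ with $P'\leq_{\exchange,\failures}P$, and show $P\leq_{\exchange,\failures}P'$. (That $P$ is itself an SBA protocol, so that the statement is not vacuous, comes from Proposition~\ref{prop:implementationIsSBA}, since $\N$ is monotonically decreasing.) I would argue by contradiction, assuming $P\not\leq_{\exchange,\failures}P'$. Unfolding the definition and using $P'\leq_{\exchange,\failures}P$, such a failure must be witnessed by a corresponding run-pair $r$ (of $P$) and $r'$ (of $P'$) and an agent $i$ that decides some value $v$ at a time $m'$ in $r'$ but decides strictly later, or never, in $r$. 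The engine is the necessary condition of Lemma~\ref{lem:belcbn}: since $P'$ is an SBA protocol, $i$'s decision forces $\bel{\N}_i\cb{\N}\exists v$ to hold at $(r',m')$. If this formula can be transported to the point $(r,m')$ of $\I_{P,\exchange,\failures}$, then the guard $\Phi_{i,v}$ of $\kbp(\Phi)$ holds at $(r,m')$, so agent $i$ decides no later than $m'$ in $r$, contradicting the assumption.

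To make the transport legitimate I would take $m'$ to be the \emph{least} time at which any such regression occurs, over all corresponding run-pairs and all agents, where a regression at time $t$ means some agent decides in the $P'$-run at $t$ but does not decide at $t$ in the $P$-run. Minimality forces the decisions taken strictly before $m'$ to coincide between $P$ and $P'$ in every run-pair, since any discrepancy would be an earlier regression. Because $\exchange$ does not transmit decision information, the value decided is invisible to $\mu_i$ and $\delta_i$, and all other actions before $m'$ are $\noop$; hence the two systems generate \emph{identical} global states at every time up to and including $m'$ in each run-pair. Consequently the time-$m'$ slices of $\I_{P,\exchange,\failures}$ and $\I_{P',\exchange,\failures}$ are isomorphic as epistemic structures: corresponding points carry the same local states (so the same relations $\sim_j$), the same value of $\exists v$ (which depends only on initial preferences), and the same active set $\A$ (which depends only on the run up to time $m'$). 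By synchrony, the reachability used to evaluate $\bel{}{}$ and $\cb{}$ at a time-$m'$ point stays inside the slice, so any such formula has the same truth value at $(r,m')$ in both systems.

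The one genuine obstacle is that the guard and the conclusion of Lemma~\ref{lem:belcbn} are phrased with $\N$, whereas the slice isomorphism preserves only $\A$: the set $\N$ depends on the \emph{entire} run, and after time $m'$ the runs $r$ and $r'$ genuinely diverge, since deciding may alter the messages an agent sends and the adversary may then react differently, making an agent faulty in one run and not the other. I would resolve this by working throughout with the equivalent $\A$-formulation. By the corollary following Proposition~\ref{prop:CBtestEquiv}, the guard $\bel{\N}_i\cb{\N}\exists v$ is equivalent to $\bel{\A}_i\cb{\A}\exists v$ in these systems, so $P$ also implements the $\A$-guarded program; and by Corollary~\ref{cor:sbaNAcrash}, $P'$ is an SBA($\A$) protocol, so Lemma~\ref{lem:belcbn} with $S=\A$ gives $\decide_i(v)\rimp\bel{\A}_i\cb{\A}\exists v$ valid in $\I_{P',\exchange,\failures}$. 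Running the transport with $\A$ in place of $\N$ now goes through, and converting back shows $\Phi_{i,v}$ holds at $(r,m')$, producing the contradiction. This is exactly where the failure-model assumptions (crash or omission, so that the $\N$/$\A$ equivalences of Section~\ref{sec:crash} apply) and the information-exchange assumption are used. Unwinding the contradiction shows there is no regression, whence, together with $P'\leq_{\exchange,\failures}P$, every agent decides at the same time in $r$ and $r'$; this gives $P\leq_{\exchange,\failures}P'$ and completes the proof.
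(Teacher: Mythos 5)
Your overall strategy coincides with the paper's. The paper also fixes $P'\leq_{\exchange,\failures}P$ and proves, by induction on the time $m$, that in \emph{every} pair of corresponding runs the agents' local states are identical at all times, the inductive step showing that at each time each agent either decides in both systems (possibly on different values) or performs $\noop$ in both: one direction comes from $P'\leq_{\exchange,\failures}P$ together with the fact that neither protocol has decided earlier, the other from Lemma~\ref{lem:belcbn} transferring the guard into $\I_{P,\exchange,\failures}$ and the fact that $P$ implements $\kbp(\Phi)$, with the no-decision-information hypothesis absorbing the possibly different decided values in the state update. Your ``least regression time $m'$'' formulation is the same argument in contrapositive dress. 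The genuine gap is your $\N$-to-$\A$ detour. Theorem~\ref{thm:sbaopt} is stated for an \emph{arbitrary} failure model $\failures$ in the paper's general sense, which includes adversaries that corrupt local states; but the two results you lean on---Corollary~\ref{cor:sbaNAcrash} and the corollary following Proposition~\ref{prop:CBtestEquiv}---are proven only for crash and omission failure contexts (their proofs construct a companion run by switching off all faults after time $m$, and they additionally assume validity of $\N\neq\emptyset$). So, as written, your argument establishes the theorem only for crash/omission models, not in the stated generality; you have, in effect, smuggled a failure-model hypothesis into a theorem that has none.

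Your underlying worry---that $\N$ is not determined by the run prefix up to $m'$---is legitimate, but the paper's resolution is different and model-independent: its invariant is global (local-state equality at \emph{every} time, for every corresponding pair), not truncated at the first regression. Once corresponding runs agree in local states and, by the no-decision-information hypothesis, in the messages sent in every round, they exhibit identical fault patterns against their shared adversary, so the run correspondence preserves $\N$ itself and $\beln_i\cbn\exists v$ transfers directly between $(r',m)$ and $(r,m)$, with no need for the crash/omission-specific $\N$/$\A$ equivalences. The divergence after $m'$ that forces your detour is an artifact of stopping your invariant at $m'$: under the standing hypotheses the corresponding runs never diverge in local state, and that is precisely what has to be proved. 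To repair your proof in full generality, strengthen the claim at the minimal time from ``the time-$m'$ slices agree'' to the paper's ``local states and action-types agree at all times,'' and transfer the $\N$-guard directly. Two smaller points: your assertion that the two systems generate \emph{identical global states} overreaches, since $\delta_e$ may record the actual values decided and these can differ; only local states, messages and fault patterns need coincide (and nothing in the semantics of the guard depends on the $L^*_e$ component). Also, your case analysis of ``regressions'' is one-directional and so does not by itself exclude the case where $P$ decides at some $t<m'$ but $P'$ never decides; like the paper, you need to invoke $P'\leq_{\exchange,\failures}P$ together with Unique-Decision to force $P'$ to decide at $t$ there as well.
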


\begin{proof} 
We prove optimality. Suppose that $P' \leq_{\exchange,\failures} P$. We show that there is no run where 
some agent $i$ running $P'$ decides strictly earlier than in the corresponding run of $P$. Moreover, 
we show that for all runs $r'$ of $\I_{P',\exchange,\failures}$, and all times $m$, 
then for the corresponding run $r$ of  $\I_{P,\exchange,\failures}$, 
we have that $r'_i(m) = r_i(m)$ for all agents $i$.

The proof is by induction on $m$. For $m=0$, we have that 
$r'_i(0) = r_i(0)$ for all agents $i$, by definition of correspondence.  
Moreover, there can be no instance of $P'$ deciding in an earlier round than $P$ before time $m=0$.

For the inductive case, assume that that we have that for all agents $i$, $r'_i(k) = r_i(k)$ for all $k \leq m$, 
and  there is no instance, before time $m$, of some agent using $P'$ deciding in an earlier round than it would using $P$. 
We show that for each agent $i$, protocols $P'$ and $P$ either both decide (possibly on different values), 
or both perform $\noop$.  It will follow from this that $r'_i(m+1) = r_i(m+1)$ for all agents $i$.
Also, it remains true for each agent $i$ that $P'$ has not decided earlier than $P$ to time $m+1$. 

We first show that for each agent $i$, either both $P_i(r_i(m)) = P'_i(r'_i(m)) = \noop$ or 
there exists $v,v' \in \Values$ such that $P_i(r_i(m)) =\decide_i(v)$ and $ P'_i(r'_i(m))=\decide_i(v')$.  
Obviously, this holds if $P_i(r_i(m)) = P'_i(r'_i(m)) = \noop$, so we need only consider the cases
where either protocol decides.  If $P'(r'_i(m)) = \decide_i(v')$, then by Lemma~\ref{lem:belcbn}, 
we have that $\I_{P',\exchange,\failures}, (r',m) \models \beln_i(\cbn \exists v')$.
Because the local states of corresponding runs of $\I_{P,\exchange,\failures}$
are identical to time $m$ to those of $\I_{P',\exchange,\failures}$, 
it follows that $\I_{P,\exchange,\failures}, (r,m) \models \beln_i(\cbn \exists v')$.
Because $P' \leq_{\exchange,\failures} P$, agent $i$ has not yet decided at the 
point $(r,m)$. Since $P$ implements $\kbp(\Phi)$, it follows that $P_i(r_i(m)) = \decide_i(v)$ for some value $v$. 
Alternately, if $P(r_i(m)) = \decide_i(v)$, then because $P'$ has not decided earlier, and  $P' \leq_{\exchange,\failures} P$, 
we must have $P'(r'_i(m)) = \decide_i(v')$ for some value $v'$. 
Thus, in either case we have that both protocols decide, as required.  

Next, we show that $r'_i(m+1) = r_i(m+1)$ for all agents $i$.
The proof considers several cases, but in each case, 
the fact that the local states of all agents are identical in $r'(m)$ and $r(m)$ 
and that for each agent $i$, protocols $P'$ and $P$ either both decide, or both perform $\noop$, 
implies that the same messages are sent by each agent in round $m+1$ of $r'$ and $r'$. 
(In the case that both protocols decide, we use the fact that $\exchange$ does not transmit decision information.) 
Moreover, the failure patterns are identical in these corresponding runs,  so the 
same vector $\rho_i$ represents the  messages received by agent $i$ in round $m+1$ in run $r$ and in run $r'$. 
If $P_i(r_i(m)) = P'_i(r'_i(m)) = \noop$, 
then we have $r'_i(m+1) = \Delta^s_i(\delta_i(r'_i(m),\noop,\rho_i)) = 
\Delta^s_i(\delta_i(r_i(m),\noop,\rho_i)) = r_i(m+1)$.
Alternately, if $P_i(r_i(m)) =\decide_i(v)$ and $ P'_i(r'_i(m))=\decide_i(v')$ then, 
because $\exchange$ does not record decision information, 
we have $r'_i(m+1) = \Delta^s_i(\delta_i(r'_i(m),\decide(v'),\rho_i)) = 
\Delta^s_i(\delta_i(r_i(m),\decide(v'),\rho_i)) = 
\Delta^s_i(\delta_i(r_i(m),\decide(v),\rho_i)) = r_i(m+1)$. 
\end{proof} 

Note that Theorem~\ref{thm:sbaopt} does not state that an implementation $P$ of the knowledge-based program is an \emph{optimum} SBA protocol, 
in the sense that $P \leq_{\exchange,\failures} P'$ for all SBA protocols $P'$ with respect to $\exchange$  and $\failures$. 
In fact, this is not necessarily true, as we will show by a counter-example, presented in Section~\ref{sec:example}. 
The counter-example illustrates a trade-off between information exchange and decision time: 
sending less information may result in making later decisions. 
The information exchange in this counter-example does not transmit information about particular 
values that have been decided, but it does transmit the information that an agent has decided. 
The example involves a situation in which a faulty agent decides earlier than the nonfaulty 
agents, who may then take this into account. 

To obtain the conclusion that implementations of the knowledge-based program 
yield \emph{optimum} SBA protocols, we need the following stronger assumption about the information exchange. 
Say that an information exchange protocol $\exchange$ \emph{does not transmit information about actions}
if for all agents $i$, there exist sets $S_i, D_i$ such that local states have the form $L_i = S_i \times D_i$, and
\begin{itemize} 
\item  there exists a function $\mu'_i: S_i \rightarrow  \Pi_{j \in Agents} \Msg_j $ 
such that for all $(s,d) \in S_i\times D_i$, and $a\in A_i$, we have $\mu_i((s,d),a) = \mu_i'(s)$, and 

\item 
there exist functions $\delta_i^1 :S_i \times  \Pi_{j \in Agents} \Msg_j \rightarrow S_i$ 
and $\delta_i^2:D_i \times A_i \rightarrow D_i$
such that for all states  $(s,d) \in S_i\times D_i$, and $a\in A_i$
and message vectors $m\in  \Pi_{j \in Agents} \Msg_j $, we have 
$\delta_i ( (s,d), a, m) = (\delta_i^1(s,m), \delta_i^2(d,a))$.

\end{itemize}
Intuitively, this says that in a local state $(s,d)\in L_i$, the local state component 
$s$ depends only on its initial value and the messages that the agent has received, and 
the local state component $d$ depends only on its initial value and the actions that the agent
has performed.  Accordingly, when these conditions hold, we call the local state 
component $s$ the \emph{message memory} 
and the local state component $d$ the \emph{action memory}. 
Similarly, the messages transmitted depend only on the local state component $s$. 
Since component $s$ does not depend on actions that the agent has performed, the messages 
sent by an agent carry no information about the agent's current or past actions. 
(We remark that the variable $\Time_i$, required to ensure synchrony, could be included in 
either the message memory $s$ or the action memory $d$. We later discuss 
an assumption on the failure model, in the context of which it is most natural to choose the latter option.)

One reason for structuring the local states in this form is that it allows us to resolve a tension between 
two concerns: to satisfy the Unique Decision property, we need agents to be able to record the fact that they have decided in their local state, but on the other hand, we want to express that agents transmit no information about their actions. 
If we allowed the messages transmitted to depend on the entire local state, then messages would carry potentially information about actions. 

To ensure that agents can implement the unique decision property, we 
say that the information exchange protocol \emph{records decision information} 
if the set $D_i$ is the disjoint union of sets $D_i^1,D_i^2$, such that (1) 
the initial states $I_i$ of agent $i$ are a subset of $S_i\times D_i^1$, 
(2) for all $d \in D_i^1$, we have $\delta_i^2(d,\noop) \in D_i^1$ and 
$\delta_i^2(d,\decide_i(v)) \in D_i^2$, and (3) for all 
$d \in D^2_i$ and $a \in A_i$, we have $\delta_i^2(d,a) \in D_i^2$.
This states, intuitively, that the agent records whether it has already decided in the state component $d$. 
We can determine whether agent has already 
decided by testing whether $d\in D_i^1$ or $d \in D_i^2$.  By ensuring that 
$P_i((s,d)) = \noop$ if $d \in D_i^2$, this allows the protocol to guarantee that the Unique Decision property will be satisfied. 

Clearly, if $\exchange$ does not transmit information about actions, then it does not transmit decision information. 
Note also that an \emph{early stopping} protocol \cite{DolevRS90}, 
which stops transmitting information once it has 
decided, satisfies the property of not transmitting decision information, but such a protocol may transmit information about actions, 
since we may still have $\mu_i(s,\noop) \neq \mu_i(s,\decide_i(v))$.

We remark that a protocol, as defined by Moses and Tuttle~\cite{MT88}, determines the messages to be sent, and actions to be 
performed, as a function of a \emph{view} (corresponding to our notion of local state) that is comprised 
of a history of messages received, a history of other inputs from the environment, the time, and the agent identity.  
This means that the  protocols of Moses and Tuttle \cite{MT88} (including their full-information protocols) do not transmit information about actions.
However, in the case of a full-information protocol, and other protocols that exchange sufficient information, 
it is in fact possible, knowing the decision protocol that the agents are running, for an agent to deduce what actions other agents have taken in the past. 

To obtain optimum protocols as implementations of the knowledge-based program, we also 
need some further assumptions on the failure model. When the information exchange does 
not transmit information about actions, we say that state failures in the failure model $\failures$ \emph{act independently on message and action memory} if all the state 
perturbation functions $\Delta^s$ in the failure model have the property that 
for all agents $i$, there 
exists functions $\Delta^s_1:\Nat \times S_i \rightarrow S_i$ and $\Delta^s_2:\Nat \times D_i$ such that  $\Delta^s_i(k,(t,d)) = (\Delta^s_1(k, t), \Delta^s_2(k,d))$ for all times $k$ and local states $(t,d)\in L_i=S_i\times D_i$.

\begin{lemma} \label{lem:coreq} 
Suppose that information exchange $\exchange$ does not transmit information about actions, and that $P$ and $P'$ are decision protocols
with respect to $\exchange$. Let the failure model $\failures$ act independently on message and action memory.
Suppose that $r$ and $r'$ are corresponding runs of $\I_{P, \exchange,\failures}$ and $\I_{P',  \exchange,\failures}$ respectively. 
Then 
\begin{enumerate} 
\item For all agents $i$ and all times $m$, the message memories are identical in $r$ and $r'$. That is, 
if $r_i(m) = (s,d)$ and $r'_i(m) = (s',d')$, then $s = s'$. Also, the vector of messages (before and after perturbation by the failure model) sent by agent $i$  in round $m+1$ in $r$ 
is the same as the vector of messages sent by agent $i$ in round $m+1$ of run $r'$. 
\item For all agents $i$ and times $m$ such that agent $i$ has performed only $\noop$ to time $m$ in both $r$ and $r'$, 
we have $r_i(m) = r'_i(m)$. 
\end{enumerate}
\end{lemma}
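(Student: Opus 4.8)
The plan is to prove both statements together by induction on $m$, exploiting the structural hypotheses to show that the evolution of the message memory is completely decoupled from the actions selected by the protocol. The one fact I would lean on repeatedly is that corresponding runs $r$ and $r'$ share the same initial global state and hence the same adversary $(\Delta^t,\Delta^r,\Delta^s)$, and that this adversary is held fixed throughout each run. Consequently the perturbation functions applied in round $k+1$ are literally identical in $r$ and in $r'$: the same $\Delta^t(k,i,j,\cdot)$, $\Delta^r(k,i,j,\cdot)$, and, after decomposition, the same $\Delta^s_1(k,\cdot)$ and $\Delta^s_2(k,\cdot)$.

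For statement (1) the induction hypothesis is that, for every agent $i$, the message memories agree at time $m$. The base case $m=0$ is immediate from correspondence, since $r_i(0)=r'_i(0)$. For the inductive step, write $r_i(m)=(s,d)$ and $r'_i(m)=(s,d')$ for each $i$, where the message-memory coordinate $s$ is common by the hypothesis. Because $\mu_i((s,d),a)=\mu'_i(s)$ for every action $a$ and every action-memory coordinate, the pre-perturbation messages sent in round $m+1$ equal $\mu'_i(s)$ in both runs, irrespective of the differing action memories and of which protocol is used. Applying the common functions $\Delta^t(m,i,j,\cdot)$ and $\Delta^r(m,i,j,\cdot)$ then yields identical post-perturbation messages, so the received vector $\rho_i$ of each agent is the same in both runs; this establishes the second half of statement (1). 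Finally, the message memory at time $m+1$ is $\Delta^s_1(m,\delta_i^1(s,\rho_i))$, computed entirely from data that agrees in both runs, so the message memories agree at $m+1$, closing the induction.

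For statement (2) I would fix an agent $i$ that performs only $\noop$ up to time $m$ in both runs. The message memory already agrees by statement (1), so it remains to track the action memory. Its initial value agrees by correspondence, and at each round $k+1$ with $k<m$ it updates by $d\mapsto \Delta^s_2(k,\delta_i^2(d,\noop))$; since the action is $\noop$ in both runs, the same functions $\delta_i^2(\cdot,\noop)$ and $\Delta^s_2(k,\cdot)$ are applied to the same argument, and a trivial sub-induction gives equality of the action memories at time $m$. Together with the agreement of message memories this yields $r_i(m)=r'_i(m)$.

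The only real content, and the step I would be careful to get right, is the decoupling in the inductive step of (1): one must verify that neither the messages sent, nor the received vector after the common adversarial perturbations, nor the message-memory update $\delta_i^1$ followed by $\Delta^s_1$, ever references the action-memory coordinate or the protocol's action choice. This is exactly what ``$\exchange$ does not transmit information about actions'' (via $\mu'_i$ and the split $\delta_i^1,\delta_i^2$) and ``$\failures$ acts independently on message and action memory'' (via the split $\Delta^s_1,\Delta^s_2$) together guarantee, and it is precisely what lets the argument proceed even though $P$ and $P'$ may select entirely different actions. Everything else is routine bookkeeping over the run-construction equations.
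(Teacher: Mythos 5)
Your proposal is correct and follows essentially the same route as the paper's proof: an induction over all agents simultaneously showing the message memories (and hence the sent and received message vectors, using that corresponding runs share the same adversary) agree at every time via $\mu'_i$, $\delta^1_i$ and $\Delta^s_1$, followed by a second induction on the action memory via $\delta^2_i(\cdot,\noop)$ and $\Delta^s_2$ for statement (2). The decoupling point you flag as the ``only real content'' is exactly the step the paper's proof makes explicit, so nothing is missing.
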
 

\begin{proof} 
 For $k \in \Nat$, let  $r_i(k) = (s_k,d_k)$ and $r'_i(k) = (s'_k,d'_k)$. 
We first show by induction that for all times $k$ and all agents $i$, we have $s_k= s'_k$. 
Since the information exchange does not transmit information about actions, it is immediate from this that 
$\mu_i(r_i(k),P_i(r_i(k))) = \mu'_i(r_i(k)) = \mu'_i(r'_i(k)) = \mu_i(r'_i(k),P'_i(r'_i(k)))$,
so the vector of messages that  agent $i$ sends in round $k+1$ 
(before perturbation by the failure model) is the same in $r$ and $r'$. 
Since the runs are corresponding, the transmission perturbation function $\Delta^t$ is the 
same in $r$ and $r'$, so the messages transmitted are also the same after perturbation. 

For $k= 0$, the claim is immediate from the fact that $r$ and $r'$ correspond, which means that 
$r(0) = r'(0)$. Assuming that the claim holds for $k$, 
as already noted, the vector of messages that agent $i$ sends in round $k+1$, after perturbation by $\Delta^t$, 
is the same  in $r$ and $r'$. 
Since the runs correspond, each agent $i$ also receives, in round $k+1$, 
both before and after perturbation by $\Delta^r$, 
the same vector $v_i$ of messages in both $r$ and $r'$. 
Since state failures act independently on the message and action memories,  we have 
$s_{k+1} = \Delta^s_1(k,\delta^1_i(s_k,v_i)) = \Delta^s_1(k,\delta^1_i(s'_k,v_i)) = s'_{k+1}$, as required. 

Next, we show by induction that for all times $k \leq m$ and all agents $i$,
if agent $i$ has performed only $\noop$ to time $m$ in both runs $r$ and $r'$, then 
also $d_k = d'_k$ for all $k \leq m$, 
We have from the fact that the runs $r$ and $r'$ correspond that $r(0) = r'(0)$, so $d_0 = d_0'$. 
Assume  $d_k= d'_k$, for $k<m$. Since the action of agent $i$ in round $k+1$ in both runs $r$ and $r'$ is 
$\noop$, we have that $d_{k+1} = \Delta^s_2(k, \delta_i^2(d_k,\noop)) = 
\Delta^s_2(k, \delta_i^2(d'_k,\noop)) = d'_{k+1}$, as required.  
\end{proof}

\begin{lemma} \label{lem:cornodecsim}
Suppose that information exchange $\exchange$ does not transmit information about actions, 
and the failure model $\failures$ acts independently on 
message and action memory.
Let $P$ and $P'$ be decision protocols with respect to $\exchange$. 
Suppose that $r$ and $\rho$ are runs of $\I_{P, \exchange,\failures}$ and
$r'$ and $\rho'$ are runs of  $\I_{P',  \exchange,\failures}$, 
such that $r$ and $r'$ are corresponding runs and 
$\rho$ and $\rho'$ are corresponding runs. 
Suppose that agent $i$ 
satisfies 
$(r, m) \sim_i (\rho,m)$, and agent $i$ does not decide before time $m$ in 
any of the runs $r,r',\rho,\rho'$. 
Then also $(r', m) \sim_i (\rho',m)$.
\end{lemma}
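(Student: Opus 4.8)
The plan is to derive the conclusion almost entirely from Lemma~\ref{lem:coreq}(2), using the given $\sim_i$-equality as the bridge between the two pairs of corresponding runs. The key observation is that the hypotheses of Lemma~\ref{lem:coreq} are exactly the standing assumptions here: $\exchange$ does not transmit information about actions, $\failures$ acts independently on message and action memory, and the pairs $(r,r')$ and $(\rho,\rho')$ are corresponding runs of $\I_{P,\exchange,\failures}$ and $\I_{P',\exchange,\failures}$. So the only thing to check before invoking that lemma is that agent $i$ satisfies the ``has performed only $\noop$'' precondition in each pair.

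First I would note that, since the only actions available to agent $i$ are $\noop$ and the decisions $\decide_i(v)$, the assumption that $i$ does not decide before time $m$ in any of $r,r',\rho,\rho'$ is precisely the statement that $i$ performs only $\noop$ up to time $m$ in each of these runs. Recall that the local state $r_i(m)$ is determined by the actions taken at times $0,\dots,m-1$, so it is the absence of a decision strictly before time $m$ that matters. This makes the precondition of Lemma~\ref{lem:coreq}(2) available for the pair $r,r'$ and, separately, for the pair $\rho,\rho'$.

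Then I would apply Lemma~\ref{lem:coreq}(2) twice: to the corresponding runs $r,r'$ to obtain $r_i(m) = r'_i(m)$, and to the corresponding runs $\rho,\rho'$ to obtain $\rho_i(m) = \rho'_i(m)$. The hypothesis $(r,m) \sim_i (\rho,m)$ unwinds to $r_i(m) = \rho_i(m)$. Chaining these three equalities gives $r'_i(m) = r_i(m) = \rho_i(m) = \rho'_i(m)$, which is exactly $(r',m) \sim_i (\rho',m)$, as required.

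There is essentially no hard step: the content is entirely packaged inside Lemma~\ref{lem:coreq}. The one point that warrants care is the translation of ``does not decide before time $m$'' into the ``only $\noop$ to time $m$'' precondition, and in particular confirming the timing convention that the actions strictly before time $m$ are what determine the time-$m$ local state. I would also double-check that Lemma~\ref{lem:coreq}(2) is being applied to genuinely corresponding runs under the same information exchange and failure model, which is guaranteed by the hypotheses; crucially, no appeal to the protocols $P,P'$ agreeing on actions is needed, since the message memory is protocol-independent across corresponding runs by Lemma~\ref{lem:coreq}(1), and the action memory coincides because only $\noop$ is ever performed up to time $m$.
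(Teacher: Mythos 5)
Your proposal is correct and matches the paper's own proof essentially verbatim: both apply Lemma~\ref{lem:coreq} to each pair of corresponding runs to obtain $r_i(m)=r'_i(m)$ and $\rho_i(m)=\rho'_i(m)$, then chain these with $r_i(m)=\rho_i(m)$ from the hypothesis $(r,m)\sim_i(\rho,m)$. Your explicit check that ``does not decide before time $m$'' amounts to the ``only $\noop$'' precondition of Lemma~\ref{lem:coreq}(2) is a point the paper leaves implicit, but it is the same argument.
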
 

\begin{proof}
Since the runs $r$ and $r'$ correspond, as do the runs $\rho$ and $\rho'$, 
and agent $i$ does not decide before time $m$ in any of these runs, 
we have by Lemma~\ref{lem:coreq} that $r_i(m) = r'_i(m)$ and 
$\rho_i(m) = \rho'_i(m)$. From  $(r, m) \sim_i (\rho'm)$ we have that 
$r_i(m) = \rho_i(m)$, and it follows that  $r'_i(m) = \rho'_i(m)$, that is, 
$(r', m) \sim_i (\rho',m)$.
\end{proof}

\newcommand{\dtime}{\mathit{dtime}}
\newcommand{\run}[1]{r^{#1}}

In case the  failure model $\failures$ acts independently on message and action memory,
we say that \emph{state failures in $\failures$ do not perturb the action memory} if for all 
$(\Delta^t,\Delta^r,\Delta^s) \in  \failures$, 
agents $i$,
times $k$ and local states $d \in D_i$, we have $\Delta^s_2(k,d) =d$. 
This means that when 
the information exchange protocol records  decision information, 
the action memory always provides an accurate record of whether 
the agent has made a decision. (This assumption supports satisfaction of the 
Unique Decision property by all agents, even when  state perturbations affect 
the message memory of faulty agents. We note also that in the context of this assumption, 
we can ensure synchrony of the system by including the $\Time_i$ variable as a component of
the action memory.) 

The following result states a sufficient condition for the implementations of the 
knowledge based program to yield an optimum, rather than optimal, implementation. 

\begin{theorem} 
Suppose that information exchange $\exchange$ does not transmit information about actions
and records decision information. Suppose the failures model $\failures$ acts independently on message and action memory, and state failures do not perturb the action memory.
 Let $P$ be an implementation of the knowledge-based program $\kbp(\Phi)$ with respect to $\exchange$ and failure model $\failures$. 
 Then $P$ is an optimum SBA protocol with respect to $\exchange$ and $\failures$. 
\end{theorem}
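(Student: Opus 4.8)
The plan is to establish the stronger inequality $P \leq_{\exchange,\failures} P'$ for an arbitrary SBA protocol $P'$ with respect to $\exchange$ and $\failures$, and, matching the informal reading of optimum, to focus on the simultaneous decision time of the nonfaulty agents in corresponding runs. Write $\tau_P(\sigma)$ for the first time at which $\bel{\N}_i \cb{\N}\exists v$ holds at $(\sigma,\cdot)$ in $\I_{P,\exchange,\failures}$ for some nonfaulty $i$ and some $v$, and similarly $\tau_{P'}$ for $\I_{P',\exchange,\failures}$. Since $P$ implements $\kbp(\Phi)$ with $\Phi_{i,v}=\bel{\N}_i\cb{\N}\exists v$, the nonfaulty agents decide in $r$ exactly at $\tau_P(r)$ (they decide simultaneously by Proposition~\ref{prop:implementationIsSBA}); and since $P'$ is an SBA($\N$) protocol, Lemma~\ref{lem:belcbn} shows its nonfaulty agents can decide in $r'$ no earlier than $\tau_{P'}(r')$. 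Hence for corresponding runs it suffices to prove $\tau_P(r)\le \tau_{P'}(r')$: then the nonfaulty agents decide in $r$ at $\tau_P(r)=\tau_{P'}(r')\le$ (their decision time in $r'$), which is exactly what $P\leq_{\exchange,\failures}P'$ asks.

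The engine of the argument is a comparison of the relations $\approx^*_{\N}$ in the two systems. By Lemma~\ref{lem:coreq}, in corresponding runs every agent's message memory agrees at every time, and the vectors of messages sent agree; because the information exchange does not transmit information about actions, an agent's entire message-memory trajectory is therefore protocol-independent, and — crucially — an agent's own indistinguishability class depends only on its message memory together with its own action memory, never on the actions of others. By Lemma~\ref{lem:cornodecsim}, a link $(\sigma,m)\sim_j(\sigma',m)$ through a nonfaulty agent $j$ that has not decided before time $m$ transfers between $\I_{P,\exchange,\failures}$ and $\I_{P',\exchange,\failures}$; and since the assumptions (independence of state failures on the two memories and non-perturbation of the action memory) let the implementation record decisions reliably, the action memory faithfully encodes whether and, by synchrony, when the agent decided, so that a transferred link through a \emph{decided} agent is valid precisely when the agent's decision round agrees. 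Finally $\exists v$ depends only on initial preferences and so agrees at corresponding points.

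I would then run an induction on time $m$. Suppose $\I_{P',\exchange,\failures},(r',m)\models\bel{\N}_i\cb{\N}\exists v$ with $m=\tau_{P'}(r')$, and unravel this into the witnessing $\approx^*_{\N}$-chain of points at time $m$ lying in runs $\rho^0,\rho^1,\dots$ of $\I_{P',\exchange,\failures}$, all connected by $\sim$-links through nonfaulty agents. Because $m$ is the first time the condition holds along the chain and Simultaneous-Agreement synchronises nonfaulty decisions, each chain run is pinned so that its nonfaulty agents do not decide before $m$; using the inductive hypothesis (which supplies $\tau_P\le\tau_{P'}$ at earlier times, hence that $P$ has not refined the classes of these chain runs \emph{more coarsely} than $P'$) one transfers the whole chain, together with its $\exists v$-labels, to the corresponding runs $\widehat{\rho}^{\,j}$ of $\I_{P,\exchange,\failures}$ by Lemmas~\ref{lem:coreq} and~\ref{lem:cornodecsim}. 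This yields $\I_{P,\exchange,\failures},(r,m)\models\bel{\N}_i\cb{\N}\exists v$, whence $\tau_P(r)\le m=\tau_{P'}(r')$, closing the induction. The same transfer of $\bel{\N}_i\cb{\N}\exists v$ applies verbatim at the moment any individual agent decides, so the conclusion extends from the nonfaulty agents to the statement as formally given.

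The step I expect to be the main obstacle is precisely this reconciliation of the two reachability structures when decisions fall at different times: $\cb{\N}$ is evaluated by travelling through points in distinct runs of a single interpreted system, so moving it between $\I_{P',\exchange,\failures}$ and $\I_{P,\exchange,\failures}$ is not a bare application of the lemmas but requires controlling, along the entire unravelling, which nonfaulty linking agents have already decided. The guiding intuition is that $P$, deciding as early as the knowledge-based program permits, produces in $\I_{P,\exchange,\failures}$ the \emph{finest} refinement of the indistinguishability classes by decided-status, and therefore the smallest $\approx^*_{\N}$-reachable sets, so common belief of $\exists v$ becomes available no later than under any $P'$; making this monotonicity precise is where the no-action-transmission assumption (so that only an agent's own decision can split its class), the recording of decision information (so the split is encoded and synchronised by Simultaneous-Agreement), and the non-perturbation of action memory (so faulty agents cannot corrupt this encoding) are all genuinely needed, and it is the load-bearing content of the induction sketched above.
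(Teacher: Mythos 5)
There is a genuine gap, and it is in the step you yourself flag as load-bearing: your chain transfer runs in the wrong direction. To conclude $\I_{P,\exchange,\failures},(r,m)\models \beln_i\cbn\exists v$ you must show that \emph{every} point reachable from $(r,m)$ via $\sim_i$ followed by $\approx^*_\N$ \emph{in $\I_{P,\exchange,\failures}$} satisfies $\exists v$. A positive common-belief assertion in $\I_{P',\exchange,\failures}$ has no ``witnessing chain'' to unravel (witnessing chains exist only for its negation), and transporting chains of $\I_{P',\exchange,\failures}$ into $\I_{P,\exchange,\failures}$ only shows that the \emph{images} of those chains carry $\exists v$; it says nothing about $\I_{P,\exchange,\failures}$-chains outside the image, which is exactly what $\cbn$ quantifies over. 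The paper's proof goes the other way: it anchors at a \emph{minimal} violation, choosing $m=\dtime_i(P',s)<\dtime_i(P,s)$ least over all pairs $(s,i)$, takes an \emph{arbitrary} chain $(r^{P,s},m)\sim_i(r^{P,t},m)\approx^*_\N(r^{P,u},m)$ in $\I_{P,\exchange,\failures}$, transfers it into $\I_{P',\exchange,\failures}$, and reads off $\exists v$ at the transferred endpoint from $\beln_i\cbn\exists v$ there (the endpoint's initial state is shared, and $\exists v$ depends only on it).

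The direction also determines whether the side condition of Lemma~\ref{lem:cornodecsim} (no decision before $m$ in all four runs) can be discharged, and in your direction it cannot. Anchored in $\I_{P,\exchange,\failures}$ at a point where $P$ has \emph{not} yet decided, ``undecided under $P$ before $m$'' propagates along the chain by simultaneity of the implementation (Proposition~\ref{prop:implementationIsSBA}) together with the faithful action memory across each $\sim_j$ link, while ``undecided under $P'$ before $m$'' follows from minimality of $m$: a $P'$-decision before $m$ in a chain run would force an at-least-as-early $P$-decision in the corresponding run, which the propagation has excluded. Your anchor is instead a point where $P'$ \emph{has} decided, which controls $P'$ along the chain but gives no control over $P$: nothing prevents $P$ from having decided before $m$ in the corresponding runs — early decision by $P$ is its intended behaviour and perfectly consistent with the theorem (take $P'$ deciding very late; at $m=\tau_{P'}(r')$ the corresponding $P$-runs along your chain will typically have long since decided, so your transfer breaks already in benign instances where no contradiction is available). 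Your inductive hypothesis $\tau_P\le\tau_{P'}$ pushes in exactly the wrong direction for this, and your ``decision round agrees'' patch for links through decided agents would need the reverse inequality on chain runs, which is false in general. A smaller point: the order $\leq_{\exchange,\failures}$ quantifies over \emph{all} agents, so the reduction to the simultaneous decision time of the nonfaulty agents, with faulty agents handled ``verbatim'' at the end, skips the case the paper treats head-on (its minimal violating agent $i$ may be faulty); and your claim $\tau_P(r)=\tau_{P'}(r')$ asserts an equality you neither need nor establish. Reversing the transfer and replacing the induction on $m$ by the global minimal-counterexample choice yields the paper's argument.
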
 

\begin{proof} 
For each initial global state $s$, and protocol $P$, there exists a unique run $r$ in $\I_{P,\exchange,\failures}$ 
with $r(0)=s$.  We write $\run{P,s}$ for this run. 
Write $\dtime_i(P,s)$ for the earliest time $m$ at which agent $i$ decides in this run (i.e., for which $P_i(\run{P,s}_i(m)) = \decide_i(v)$ for some value $v$) 
or $\infty$ in case the agent makes no decision.
Since the adversary is encoded into the global state at time 0, we may write $\N(s)$ for the set of faulty agents in this run. 
Note that for two protocols $P$ and $P'$, the runs $\run{P,s}$ and $\run{P',s}$ are corresponding. 

By Proposition~\ref{prop:implementationIsSBA}, since
$P$ is an implementation of the knowledge-based program $\kbp(\Phi)$ with respect to $\exchange$ and failure model $\failures$, we have that $P$ is an SBA protocol with respect to $\exchange$ and $\failures$. 
We derive a contradiction from the assumption that $P$ is not an optimum SBA protocol 
with respect to $\exchange$ and $\failures$. 
Let $P'$ be an SBA protocol with respect to $\exchange$ and $\failures$ such that not $P \leq_{\exchange,\failures} P'$.  
In this case, there exists an initial global state $s$ and an agent $i$ such that $\dtime_i(P',s) < \dtime_i(P,s)$. 
Let $m$ be the least value such that there exists a global state $s$ and an agent $i$ such that $m=\dtime_i(P',s) < \dtime_i(P,s)$. 
Then for all $k<m$, for all initial global states $t$, and all agents $i$, we have that if $k = \dtime_i(P',t)$ then $\dtime_i(P,t) \leq k$.

Since agent $i$ decides at time $m$ in $\run{P',s}$, we have $\I_{P',\exchange,\failures}, (\run{P',s},m) \models \beln_i\cbn \exists v$ for 
some value $v$, by Lemma~\ref{lem:belcbn}. We show that also $\I_{P,\exchange,\failures}, (\run{P,s},m) \models \beln_i\cbn \exists v$. 
Because $P$ implements the knowledge based program $\kbp(\Phi)$, this implies that agent $i$ decides in $\run{P,s}$ 
at time $m$ or earlier, that is, $\dtime_i(P,s) \leq m$, contradicting $m=\dtime_i(P',s) < \dtime_i(P,s)$. 

To show that $\I_{P,\exchange,\failures}, (\run{P,s},m) \models \beln_i\cbn \exists v$, 
consider runs $\run{P,t}$  and  $\run{P,u}$ 
of $\I_{P,\exchange,\failures}$ such that $i \in \N(t)$ and 
$(\run{P,s},m) \sim_i (\run{P,t},m) \approx^*_\N  (\run{P,u},m)$.
We need to show that $\I_{P,\exchange,\failures}, (\run{P,u},m) \models \exists v$. 
For this, we show that 
$(\run{P',s},m) \sim_i (\run{P',t},m) \approx^*_\N  (\run{P',u},m)$.
It then follows from $\I_{P',\exchange,\failures}, (\run{P',s},m) \models \beln_i\cbn \exists v$
that $\I_{P',\exchange,\failures}, (\run{P',u},m) \models \exists v$. 
Because $\exists v$ is a property of the initial state $u$, we then also have that 
$\I_{P,\exchange,\failures}, (\run{P,u},m) \models \exists v$, as required. 

Note first that since  the information exchange records decisions, 
agent $i$ has not decided before time $m$ in $r^{P,s}$, 
and the failure model does not perturb the action memory,  
we have that  the action memory in $r^{P,s}_i(m)$ 
is in $D_i^1$ (recording that $i$ has not yet made a decision). 
Since $(r^{P,s},m) \sim_i (r^{P,t},m)$, we also have that 
the action memory in $r^{P,t}_i(m)$ 
is in $D_i^1$. Since the failure model does not perturb the action memory,
it follows that agent $i$ has not decided before time $m$ in $r^{P,t}$. 

Since agent $i$ decides at time $m$ in $r^{P',s}$, and $P'$ is an SBA protocol, agent $i$ has not decided before time $m$ in $r^{P',s}$.
Thus, by Lemma~\ref{lem:coreq}, we have that $r^{P',s}(m) = r^{P,s}(m)$.

Consider the run $r^{P',t}$.  If agent $i$ decides before time $m$ in $r^{P',t}$, 
then we have a contradiction to the minimality assumption on 
$m$, since agent $i$ does not decide before time $m$ in $r^{P,t}$. Hence agent $i$ has not decided before time $m$ in any of the runs  $r^{P,s}$,  $r^{P',s}$,  $r^{P,t}$,  $r^{P',t}$.  By Lemma~\ref{lem:cornodecsim}, we have that $(r^{P',s},m)\sim_i (r^{P',t},m)$. 

Note that because $P'_i(r^{P',s}_i(m)) = \decide_i(w)$ for some value $w$, and 
$(r^{P',s},m)\sim_i (r^{P',t},m)$,  we also have that $P'_i(r^{P',t}_i(m)) = \decide_i(w)$. Similarly, because 
 $P'_i(r^{P',s}_i(m)) = \noop$, and $(r^{P,s},m)\sim_i (r^{P,t},m)$, we also have $P'_i(r^{P',s}_i(m)) = \noop$.
 As already noted, agent $i$ does not decide before time $m$ in $r^{P,t}$. 
 Hence the pair of corresponding points $(r^{P',t},m)$ and $(r^{P,t},m)$ also provides a witness 
 for the minimality of $m$ as a time where $P'$ decides strictly before $P$. 
 
By an induction on the chain of indistinguishability relations witnessing 
$(\run{P,t},m) \approx^*_\N  (\run{P,u},m)$
that repeats the above arguments, we derive that $(\run{P',t},m) \approx^*_\N  (\run{P',u},m)$, 
as claimed. 
This proves the claim that $(\run{P',s},m) \sim_i (\run{P',t},m) \approx^*_\N  (\run{P',u},m)$.
\end{proof}

\section{A Counter-example} \label{sec:example}

In this section, we present the counter-example promised above, showing that to obtain 
and optimum SBA protocol as an implementation of the knowledge-based program $\kbp(\Phi)$,
it is not enough to assume that the information exchange does not transmit information about 
decisions.  We demonstrate that the implementation $P$ of $\kbp(\Phi)$ 
with respect to an information exchange $\exchange$ and the sending omissions failure model $SO_t$ 
is not always an optimum SBA protocol with respect to $\exchange$ and $SO_t$.  To do so, we provide an SBA protocol $P'$ with respect to $\exchange$ and $SO_t$ such that we do not have $P \leq_{\exchange,SO_t} P'$. We take $\Values = \{0,1\}$ and give the description of $P'$ for an arbitrary number $n$ of agents of which up to 
$t\leq n$  are faulty, but then specialize to $n=4$ and $t=3$ for the counter-example.

The information exchange $\exchange$ is defined as follows. 
The local states $L_i$ of agent $i$ are tuples of the form $\langle \init_i, \known_i, \new_i, \kfaulty_i, \done_i, \Time_i\rangle$, 
where 
\begin{itemize} 
\item $\init_i\in \{0,1\}$ is the agent's initial value, 
\item $\known_i\in \mathcal{P}(\{0,1\})$ is, intuitively, 
the set of values that the agent knows to be the initial value of some agent, 
\item $\new_i\in \mathcal{P}(\{0,1\})$ is, intuitively, 
the set of values that the agent first learned about in the most recent round,
\item $\kfaulty_i \in \mathcal{P}(\Agents)$ is, intuitively, the set of agents that the agent knows to be faulty, 
\item $\done_i\in \{0,1\}$ indicates whether the agent has made a decision, and 
\item $\Time_i$ is the current time. 
\end{itemize}
The initial local states $I_i$ are the states with 
$\known_i=\{\init_i\}$,  $\new_i=\{\init_i\}$,  $\kfaulty = \emptyset$ and $\done_i = \Time_i = 0$. 

Agent $i$'s set of messages $\Msg_i$ contains $\bot$ and messages of the form $\langle n, f\rangle$, 
where $n\subseteq \{0,1\}$ and $f \subseteq \Agents$. Intuitively, $n$ is a set of values that agent $i$ has 
just learned about, and $f$ is a set of agents that agent $i$ knows to be faulty. The message that agent $i$ sends
when it performs action $a$ and has local state $s_i= \langle \init_i, \known_i, \new_i, \kfaulty_i, \done_i,\Time_i\rangle$
is defined as follows: 
\begin{itemize} 
\item If either $\done_i = 1$ or $a = \decide_i(v)$ for some $v\in \{0,1\}$, 
then $\mu_i(s_i,a) = \langle \emptyset, \emptyset\rangle$. Intuitively, if either the agent is in the process of deciding, 
or it has already decided, then it sends a message carrying no information. Note that this is different from sending 
no message, since reception of such a message informs the recipient that agent $i$ did not make a sending omission 
fault in the current round. Effectively, when an agent decides, it stops participating in the protocol, except for 
sending a heartbeat message in each round. 

\item Otherwise $\mu_i(s_i,a) = \langle \new_i, \kfaulty_i \rangle$. That is, if the agent has not yet decided 
and in the current round performs the action $a = \noop$, the agent transmits the set of values it 
has newly learned about, and the set of agents that it knows to be faulty. 
\end{itemize} 

When agent $i$ is in local state $s_i= \langle \init_i, \known_i, \new_i, \kfaulty_i, \done_i,\Time_i\rangle$, 
performs action $a$, and receives vector of messages $(m_1, \ldots,m_n)$  from the other agents, 
agent $i$'s state update $\delta_i(s_i,a,(m_1, \ldots,m_n)) = \langle \init'_i, \known'_i, \new'_i, \kfaulty'_i, \done'_i, \Time'_i \rangle$ is defined as follows. 
Let $J \subseteq \Agents$ be the set of agents from which agent $i$ actually receives a message, 
so that $m_j = \bot$ iff $j \not\in J$. 
For $j \in J$, suppose  $m_j = (n_j,f_j)$. Then 
\begin{itemize} 
\item $\init_i' = \init_i$,  
\item $\known_i' = \known_i \cup \bigcup_{j\in J} n_j$,
\item $\new_i' = \known_i' \setminus \known_i$, 
\item $\kfaulty_i'  = \kfaulty_i \cup (\Agents\setminus J) \cup \bigcup_{j\in J} f_j$,
\item if $a = \decide_i(v)$ for some $v\in \{0,1\}$, then 
 $\done_i =1$, otherwise $\done_i'= \done_i$, and 
\item $\Time_i' = \Time_i + 1$. 
\end{itemize} 
Intuitively, the agent collects in $\known_i'$ 
the values that it has heard about, either previously or as new values transmitted by 
the other agents in the current round. It records an agent $j$ as known to be faulty in  $\kfaulty_i'$
if either it already knew $j$ to be faulty, it does not receive a message from $j$ in the 
current round, or it receives a message saying that $j$ is faulty. 
This completes the description of the information exchange $\exchange$.

It is easily checked from the definition of the message transmission function $\mu_i$ and 
the state update function $\delta_i$ that this information exchange does not transmit information 
about decisions, since the outputs of these functions are independent of the 
value $v$ in case the action is $\decide_i(v)$.  However, the information exchange does transmit information about actions: an agent records that fact that it has decided in its local state, and
the messages it sends may differ once it has decided. 

The protocol $P'$ is defined for agent $i$ on a local state $s_i= \langle \init_i, \known_i, \new_i, \kfaulty_i, \done_i,\Time_i\rangle$, when there may be up to $t$ faulty agents, by 
$P'_i(s_i) = \decide_i(v)$ if $\done_i =0$ and $v$ is the least value in $\known_i$ 
and either $\Time = t+1$ or $\kfaulty_i = \Agents \setminus \{i\}$,  and $P_i(s_i) = \noop$ otherwise.  
That is, an agent decides if it learns that it is the only nonfaulty agent, otherwise it waits to time $t+1$
to make a decision.

Termination is not a requirement of our specification of SBA, 
but we note that the following lemma aids in showing that $P'$ is in fact a terminating SBA protocol. 
When $r$ is a run of a protocol and $m$ a time, we write $\known_i(r,m)$, $\new_i(r,m)$, etc, 
for the components of the local state $r_i(m)$ of agent $i$ in the run $r$. 

\begin{lemma} \label{lem:faultyseq}
Let $r$ be a run of $\I_{P',\exchange,SO_t}$ 
and suppose that for a value $v \in \{0,1\}$ and $k\geq 1$, 
there exist agents $i_0, \ldots , i_{k+1}$ such that $v \in \new_{i_m}(r,m)$
for all $m= 0\ldots k+1$. Then  agents $i_0, \ldots i_{k+1}$ are distinct, and for each $j = 0\ldots k-1$, 
agent $i_j$  has a sending omission fault in round $j+1$. Thus $k\leq t$. 
\end{lemma}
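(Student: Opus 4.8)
The plan is to track the propagation of the value $v$ as ``new information'' through the system round by round. For each time $m$ set $W(m) = \{i : v \in \new_i(r,m)\}$ (the agents to whom $v$ is new at time $m$, the ``wavefront'') and $K(m) = \{i : v \in \known_i(r,m)\}$ (the agents that know $v$ by time $m$); the hypothesis says precisely that $i_m \in W(m)$ for $m = 0,\ldots,k+1$. Read off $\delta_i$ two facts: $\known_i(r,m) \subseteq \known_i(r,m+1)$ always, and for $m \geq 1$ we have $\new_i(r,m) = \known_i(r,m) \setminus \known_i(r,m-1)$, while $\new_i(r,0) = \{\init_i\}$. Distinctness is then the easy part: if $i_a = i_b$ with $a<b$, then $v \in \new_{i_a}(r,a) \subseteq \known_{i_a}(r,a) \subseteq \known_{i_a}(r,b-1)$, contradicting that $v \in \new_{i_b}(r,b)$ requires $v \notin \known_{i_b}(r,b-1)$.

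Next I would establish the backbone of the propagation argument. For $m \geq 1$, membership $v \in \new_{i_m}(r,m)$ forces $i_m$ to have received, in round $m$, a message $\langle n_\ell, f_\ell\rangle$ with $v \in n_\ell$ from some agent $\ell$; since $SO_t$ has no reception or corruption faults, this is the genuine message $\ell$ sent, and because an agent emits its $\new$ set only when it is neither deciding nor already done (otherwise it sends $\langle \emptyset,\emptyset\rangle$), we get $v \in \new_\ell(r,m-1)$, i.e. $\ell \in W(m-1)$ and $\ell$ was still broadcasting in round $m$. Thus $v$ travels only along the wavefront, and only through participating agents.

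The sending-omission claim for $j \leq k-1$ I would prove by contradiction, using an agent ignorant of $v$. Since $j+2 \leq k+1$, the hypothesis gives $i_{j+2} \in W(j+2)$, hence $v \notin \known_{i_{j+2}}(r,j+1)$, so $i_{j+2} \notin K(j+1)$; meanwhile $i_j \in W(j) \subseteq K(j)$, so $v \in \new_{i_j}(r,j)$. Suppose $i_j$ had no sending omission in round $j+1$. If $i_j$ were still broadcasting it would send $\langle \new_{i_j}(r,j),\ldots\rangle$, containing $v$, to every agent, and with no omission this reaches $i_{j+2}$, forcing $v \in \known_{i_{j+2}}(r,j+1)$, a contradiction. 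So $i_j$ must omit the message to (at least) $i_{j+2}$, giving the required fault. Finally $i_0,\ldots,i_{k-1}$ are $k$ pairwise-distinct agents each with a sending-omission fault, hence faulty, so $k \leq t$ in $SO_t$.

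The hard part will be the ``mode'' distinction just glossed over: the contradiction only bites when $i_j$ is still broadcasting in round $j+1$, whereas if $i_j$ has already decided (or decides in that round) it transmits the contentless heartbeat $\langle \emptyset,\emptyset\rangle$ and the argument collapses. I would therefore need to rule out, for $j \leq k-1$, that $i_j$ is in this done/deciding mode. Under $P'$ an agent decides only at time $t+1$ or when $\kfaulty_i = \Agents \setminus \{i\}$. The first trigger is disposed of by the separate observation that every agent is done by round $t+2$, whence $W(m) = \emptyset$ for $m \geq t+2$ and so $j+1 \leq k \leq t < t+2$. The genuinely delicate case is the sole-survivor trigger: one must show that an agent which has just learned $v$ yet has concluded that every other agent is faulty cannot sit in the interior of the wavefront while the chain continues past it. This is exactly the tension between early decision and information flow that the paper's counter-example exploits, and it is where I expect the real work, and the main risk to the statement as phrased, to lie.
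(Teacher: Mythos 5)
Your first three paragraphs are exactly the paper's proof, slightly more carefully written: the paper establishes distinctness by the same observation that a value cannot be ``new'' to an agent twice, and proves the omission claim by the same contradiction through $i_{j+2}$ (a nonfaulty $i_j$ with $v\in\new_{i_j}(r,j)$ would transmit $v$ in round $j+1$, so $v$ would be in $\known_{i_{j+2}}(r,j+1)$, contradicting $v\in\new_{i_{j+2}}(r,j+2)$), then counts the $k$ distinct faulty agents to get $k\le t$. Where you differ is the final paragraph: the paper's proof simply asserts that nonfaulty $i_j$ ``transmits a message $(n,f)$ in round $j+1$ with $v\in n$'', silently passing over the case you isolate, in which $i_j$ has decided or is deciding and therefore sends $\langle\emptyset,\emptyset\rangle$ under $\mu_i$. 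Your disposal of the time trigger is sound and not circular: since every agent decides by round $t+2$, all messages from round $t+2$ on are $\langle\emptyset,\emptyset\rangle$, so $\new_i(r,m)=\emptyset$ for $m\ge t+2$, which both forces $k\le t$ from the hypothesis at position $k+1$ and puts round $j+1\le k$ safely before any time-triggered decision.

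Your instinct that the sole-survivor trigger is ``the main risk to the statement as phrased'' is vindicated: the lemma as stated is actually false for large enough $n$ with $t\ge n-1$, and the paper's proof shares the gap you found. Two observations bound the damage. First, $\kfaulty_i$ is sound in $SO_t$ (it contains only genuinely faulty agents), so $\kfaulty_i=\Agents\setminus\{i\}$ entails that at least $n-1$ agents are faulty; hence for $t\le n-2$ the trigger never fires and the gap closes. Second, routing the chain past a decided $i_j$ requires a second wavefront agent $\ell\ne i_j$ with $v\in\new_\ell(r,j)$ to feed $i_{j+1}$, and since no agent has $v$ new twice, the agents $i_0,\dots,i_{j+2},\ell$ are distinct, giving $j+4\le n$; moreover the trigger cannot fire at time $1$ for an agent that freshly receives $v$ (round-$1$ fault reports are empty, so the deliverer of $v$ is not in $\kfaulty$), forcing $j\ge 2$ and hence $n\ge 6$. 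So the paper's instantiation $n=4$, $t=3$ is safe. But for $n=6$, $t=5$ a genuine violation exists: let agent $1$ be nonfaulty, $2$--$6$ faulty, $\init_2=0$, all other initial values $1$; in round $1$ agent $2$ delivers only to $3$, agent $3$ omits to $4$, and $4,5,6$ omit to $1$; in round $2$ agent $3$ delivers $\langle\{0\},\emptyset\rangle$ to $1$ and $5$ only, and $4$'s report $\{2,3\}$ together with the round-$1$ omissions gives $\kfaulty_1(r,2)=\{2,\dots,6\}$, so agent $1$ decides in round $3$ while $0\in\new_1(r,2)$; in round $3$ agent $5$ delivers $0$ to $4$ but omits to $6$; in round $4$ agent $4$ delivers $0$ to $6$. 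The chain $i_0,\dots,i_4=2,3,1,4,6$ satisfies the hypothesis with $k=3$, yet $i_2=1$ is nonfaulty and commits no sending-omission fault in round $3$. So your proof strategy is the right one, but completing it honestly requires either assuming $t\le n-2$ or weakening the round-specific conclusion; the paper's own proof needs the same repair.
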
 

\begin{proof} 
Note first that we cannot have $i_j = i_{j'}$ for $j\neq j'$, for then the value $v$ is received by some agent for the first time twice on the run, contrary to the update rule for $w_i$ and $\new_i$.  Next, suppose that $i_j$ is not 
faulty for some $j \in \{0, \ldots, k-1\}$. Note $j+2 \leq k+1$. Since $v\in \new_{i_j}(j)$, 
agent $i_j$ transmits a message $(n,f)$ in round $j+1$ with $v \in n$, which agent $i_{j+2}$ receives. 
Thus we have that either $v\in \new_{i_j}(r,m)$ for some $m < j+1$, or $v\in \new_{i_j}(r,j+1)$. Both cases 
contradict the assumption that $v\in \new_{i_j}(r,j+2)$. Thus, we must have that $i_j$ has a sending 
omission fault in round $j+1$. The conclusion that $k \leq t$ follows using the fact that the agents $i_0, \ldots i_{k-1}$ 
are distinct and all are faulty in run~$r$. 
\end{proof} 

Using this lemma, we may show the following. 

\begin{proposition} 
$P'$ is an SBA protocol with respect to $\exchange$ and $SO_t$. 
\end{proposition}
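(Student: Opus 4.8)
The plan is to verify the three clauses of SBA($\N$) directly, supported by two invariants that hold because $SO_t$ allows only sending-omission faults (so $\Delta^s$ and $\Delta^r$ are correct, and every transmitted message is either delivered unchanged or dropped). The first invariant is that every value occurring in $\known_i(r,m)$ is the initial preference of some agent; I would prove this by induction on $m$, using that $\known_i$ grows only by absorbing the $\new$-components of received messages, that $\new_j \subseteq \known_j$ always, and that under $SO_t$ even faulty senders transmit correct message content. The second invariant is that $\kfaulty_i(r,m)$ contains only agents faulty in $r$; the key point is that a nonfaulty agent always sends a non-$\bot$ message and reception is reliable, so failing to hear from an agent witnesses a genuine sending omission, while the reported sets $f_j$ are transmitted without corruption and, by induction, list only faulty agents.

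Given these, Unique-Decision is immediate: $P'$ acts only when $\done_i = 0$, the update sets $\done_i = 1$ on any $\decide_i(v)$, and since $\Delta^s$ is correct this bit is never reset; hence each agent decides at most once. Validity($\N$) follows at once from the first invariant, since a deciding agent chooses the least value of $\known_i$, which is some agent's initial value.

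The substance is Simultaneous-Agreement($\N$). If at most one agent is nonfaulty the clause is vacuous or trivial, so I assume there are at least two nonfaulty agents. By the second invariant a nonfaulty agent $i$ can never satisfy $\kfaulty_i = \Agents\setminus\{i\}$, since a second nonfaulty agent lies outside $\kfaulty_i$; hence no nonfaulty agent decides before time $t+1$, and at time $t+1$ every nonfaulty agent still has $\done_i = 0$ and so decides on the least value of its $\known_i$-set. It therefore suffices to show that all nonfaulty agents hold the same $\known$-set at time $t+1$, for which I would establish: if some nonfaulty agent knows $v$ at time $t+1$, then every nonfaulty agent does. I would split on the first time a nonfaulty agent freshly learns $v$. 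If a nonfaulty agent $i$ has $v \in \new_i(r,m)$ with $m \le t$, then, since with two or more nonfaulty agents it does not decide before $t+1$, it performs $\noop$ and in round $m+1$ broadcasts $\new_i \ni v$; as $i$ is nonfaulty all agents receive it, so $v \in \known_j(r,m+1)$, and by monotonicity of $\known$ we get $v \in \known_j(r,t+1)$ for every nonfaulty $j$. The remaining case is that the witnessing nonfaulty agent first learns $v$ at time exactly $t+1$. Tracing the freshly learned value back one round at a time yields a chain $i_0,\ldots,i_{t+1}$ with $v \in \new_{i_m}(r,m)$, and Lemma~\ref{lem:faultyseq} (with $k=t$) forces $i_0,\ldots,i_{t-1}$ to be distinct faulty agents; as at most $t$ agents are faulty these are exactly the faulty agents, so the distinct agent $i_t$ is nonfaulty and has $v \in \new_{i_t}(r,t)$, reducing to the previous case. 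Thus every nonfaulty agent knows $v$ at $t+1$, the nonfaulty agents share one $\known$-set, select the same least value, and Simultaneous-Agreement($\N$) holds.

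I expect this last, ``time $t+1$'' argument to be the main obstacle: the delicate observation is that a chain of freshly learned values reaching time $t+1$ exhausts the entire fault budget, which is exactly what certifies that its penultimate agent $i_t$ is nonfaulty and hence broadcasts $v$ to all nonfaulty agents by the deadline.
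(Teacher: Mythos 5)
Your proposal is correct and follows essentially the same route as the paper's proof: the same two invariants (values in $\known_i$ trace to initial preferences; $\kfaulty_i$ contains only genuinely faulty agents), the same observation that with two or more nonfaulty agents none decides before time $t+1$, and the same appeal to Lemma~\ref{lem:faultyseq} to exhaust the fault budget and certify that the penultimate agent in the chain is nonfaulty. The only difference is presentational: you prove directly that all nonfaulty agents share the same $\known$-set at time $t+1$, whereas the paper derives a contradiction from two nonfaulty agents deciding on different least values --- the underlying chain argument is identical.
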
 

\begin{proof} 
Unique-Decision holds because an agent performs a $\decide_i(v)$ action only if $\done_i=0$, 
and the variable $\done_i$ captures whether the agent has performed a  $\decide_i(v')$ action
some time in the past.  Validity($\N$) holds because when agent $i$ performs $\decide_i(v)$, 
we have $v \in w_i$, which can be the case only when some agent $j$ had $\init_j =v$, 
by a straightforward induction using the initial condition and update rule for $w_i$. 
 
For Simultaneous-Agreement, suppose that $i\in \N$ performs $\decide_i(v)$ in round $m+1$ of run $r$. 
By definition of $\exchange$, and a straightforward induction, 
the set $\kfaulty_i(r,m)$ contains only faulty agents. 
Hence, in the case where $\kfaulty_i(r,m) = \Agents \setminus \{i\}$, we have that $i$ is the only nonfaulty 
agent in run $r$, and Simultaneous-Agreement holds trivially. Otherwise, suppose that 
$m= t+1$. If any nonfaulty agent $j\neq i$ decided earlier, then $j$ can only have done so because it is the only 
nonfaulty agent, contradicting the assumption that $i$ is nonfaulty. Hence no nonfaulty agent has decided earlier. 
This implies that all nonfaulty agents decide in round $m+1$ also. 
It remains to show that they decide on the same value. We derive a contradiction from the assumption that they do not. In this case, there exist nonfaulty agents $i\neq j$ such that $i$ decides 0 and $j$ decides $1$. 
Since agents decide upon the least value, we must have that $0 \not \in w_j(r,t+1)$. 
Because $i$ is nonfaulty, we must have $0 \in \new_i(r,t+1)$, for if $i$ had a 0 any earlier, 
$j$ would have received 0 by time $t+1$. But for $0 \in \new_i(r,t+1)$, there 
must exist a sequence of agents $i_0,i_1, \ldots i_t, i_{t+1}$ with $i_{t+1} = i$, 
with $0\in \new_{i_j}(r,j)$ for all $j = 0 \ldots t+1$.  By Lemma~\ref{lem:faultyseq}, 
we must have that agents $i_0, \ldots i_{t+1}$ are distinct and $i_0, \ldots i_{t-1}$ are faulty. 
But since there are at most $t$ faulty agents, this means that $i_t$ is nonfaulty and 
$0 \in \new_{i_t}(r,t)$. This implies that $0 \in w_j(r,t+1)$, the required contradiction. 
\end{proof} 

We now argue that for the implementation $P$ of $\kbp(\Phi)$ with respect to $\exchange$ and $SO_t$, 
we do not have that $P \leq_{\exchange,SO_t} P'$. Consider the case of $n=4$ and $t=3$, 
and let $r$ be a run in which the only failures are that agents 1,2, and 3 omit to send 
their message to agent 1 in round 1. Note that the model is defined in such a way that an agent is able to 
detect its own faultiness by seeing that a message it sent to itself was not received. 
Hence, we have $\kfaulty_1(r,1) = \{1,2,3\}$. In case of protocol $P$, this means that 
$\I_{P,\exchange,SO_t},(r,1) \models K_i (i \not \in \N)$,  which implies that 
 $\I_{P,\exchange,SO_t},(r,1) \models \beln_i \cbn \exists v$ for all $v$. 
 According to $P$, therefore, agent $1$ decides in round 2  and sends the message 
$(\emptyset,\emptyset)$ in round 2 (and all subsequent rounds). This means that at time 2,
all other agents $i$ have $\kfaulty_i(r,2) = \emptyset$. The run is indistinguishable 
to the other agents from a run without failures. When $t= n-1$, the earliest possible decision 
time in a run without failures is round $t+1$
(see appendix), but $t=3$, so no nonfaulty agent running $P$
can decide in round 3 in run $r$. 

On the other hand, for protocol $P'$, agent $1$ does not decide in round 2 of the run $r'$
corresponding to $r$, 
since we do not have $\kfaulty_1(r',1) = \Agents \setminus \{1\}$ or $1 = t+1= 4$. 
By the definition of $\exchange$, this means that agent 1 sends a message 
$(w,\{1,2,3\})$ in round 2, and the nonfaulty agents $i$ have 
$\kfaulty_i(r',2) = \{1,2,3\}$.  This means that the nonfaulty agents all decide in round 3 of
the run $r'$. 

We therefore have a run in which the nonfaulty agents decide earlier using $P'$ than they do when using 
the corresponding run of $P$, so it is not the case that  $P \leq_{\exchange,SO_t} P'$.
We remark that this remains the case had we defined $\leq_{\exchange,\failures} $
by comparing decision times of only the nonfaulty agents, rather than all agents. 

Figure~\ref{fig:nodec} shows a key part of the argument for the fact that a decision cannot be made in round three in 
a failure free run. 
The figure depicts a sequence of runs for four agents and indistinguishability relations 
at time 2, from a failure free run (at the top of the diagram) with both 0 and 1 values to a run (at the bottom of the 
diagram) with only 0 values. 
Dashed lines indicate messages that are \emph{not} sent. (We omit messages that are sent in 
order to avoid cluttering the diagram.) $N$ indicates nonfaulty agents. 
This shows that the first run, at time 2, is $\approx^*_\N$ related to the last (also at time 2). 
The first run is indistinguishable to the first agent from a similar run that has three 1 and one 0 value, and a similar sequence then shows that there is also a $\approx_\N$ path to a run with only 1 values. 
It follows that, in a failure free run such as the first, 
we do not have either $\beln_i\cb{\N} \exists 0$ or $\beln_i\cb{\N} \exists 1$. 

\begin{figure} 
\centerline{\includegraphics[width=4in]{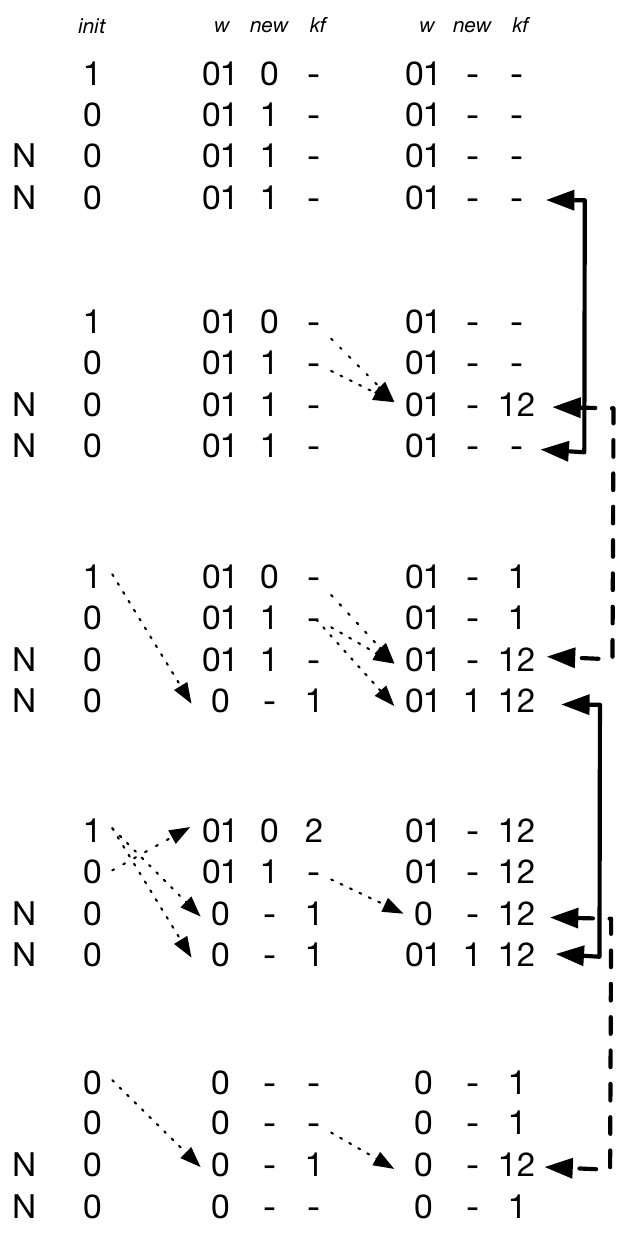}}
\caption{Runs of $\exchange$\label{fig:nodec}} 
\end{figure}

\section{Conclusion} 
\label{sec:concl} 
Our focus has been on \emph{Simultaneous} Byzantine Agreement, in which the 
nonfaulty agents are required to decide at the same time. A number of variants of the 
specification have been studied in the literature on the knowedge based approach to distributed algorithms. 

One dimension of variation is  the behaviour of faulty agents. 
The SBA specification  does not require the faulty agents to make the same
decision as the nonfaulty agents. 
Neiger and Tuttle \cite{NeigerT93} consider 
the \emph{uniform} (also called \emph{consistent}) variant, in which the faulty 
agents, if they decide, must agree with the nonfaulty agents. 
They show that a different 
formulation of common knowledge captures the condition under which a decision can be made, 
which is equivalent to the ``common belief'' condition for the crash and sending 
omissions failure models, but may differ otherwise. In general, the faulty agents cannot 
decide before the nonfaulty agents in this problem, so the example of Section~\ref{sec:example} 
does not apply, and it remains open to understand optimality of Uniform SBA with respect to limited information exchange. 

\roncomment{its probably not too hard to resolve this and add it to the paper! Something odd: 
NT say that CK and CB are equivalent for crash and SO, but why then does the CB kbp allow
fault to decide earlier? Is it just for nonfaulty that they are equivalent? } 

Another dimension of variation is 
simultaneity. In 
\emph{Eventual Byzantine Agreement} (EBA), nonfaulty agents may decide at 
different times. In general, there is not an optimum protocol for this specification, 
but there are optimal protocols. Halpern, Moses and Waarts \cite{HalpernMW01} show that a  more
complex notion called ``continual common knowledge'' is required to 
capture the conditions under which a decision can be made in optimal protocols
for EBA.  Neiger and Bazzi \cite{NeigerB99} show that adding a termination requirement 
to the specification further complicates the required notion of common knowledge. 
We do not presently have a general characterization of optimality with respect to limited information 
exchange for EBA. Alpturer, Halpern and van der Meyden \cite{AHM23} present optimal 
protocols, for full information exchange and for two specific limited information exchanges, but the 
proof of optimality for the latter uses side conditions that do not hold in general. 
In particular, information exchanges involving reports about faults detected, such as 
our example in Section~\ref{sec:example}, do not satisfy these side conditions. 
A satisfactory general characterization of 
optimality for EBA with respect to limited information exchange therefore 
remains open.

We have identified conditions on the information exchange under which the 
knowledge-based program $\kbp(\Phi)$ gives an optimum with respect to a limited information exchange that does 
not transmit information about actions, but also a counter-example 
that shows that  $\kbp(\Phi)$ yields an optimal but not optimum implementation 
when the information exchange transmits information about actions. 
The underlying reason is that the knowledge based program
 forces faulty agents to decide early, and this may diminish the amount of 
information available to the nonfaulty agents. 

Conceivably, another knowledge based program can express the 
optimum implementation, if one exists,  
with respect to an order that compares the decision times of only the nonfaulty agents only.
However,  it would seem that such a program would need agents that discover that they are faulty to determine 
when they decide based on counterfactual reasoning about the consequences, on the decision times
of the nonfaulty agents, of  deciding or deferring a decision. 
This introduces a number of complexities. For one thing, the knowledge-based program would need to 
refer to the future, and a unique implementation of the knowledge based program is then not guaranteed to exist.
Counterfactual reasoning in knowledge based programs also requires a more complex semantic framework, which
has been little studied. (The only relevant work is \cite{HalpernM04}.)  
We therefore leave this question for future work. 

A final issue left for future work is the impact of a change of a number of definitions invovling 
quantifications over the set of all agents. The quantification could instead be made over the set 
of nonfaulty agents. Definitions where this could make sense include the Unique Decision property, 
the order $\leq_{\exchange,\failures}$, and the definition of an implementation of a knowledge-based program. Such alternatives may be of particular interest in Byzantine settings where 
state perturbations do not satisfy the restrictions we have assumed on the failure model 
to obtain a number of our results.

\backmatter 

\bmhead{Acknowledgements}

The Commonwealth of Australia (represented by the Defence Science and Technology
Group) supported this research through a Defence Science Partnerships agreement.

\bibliography{bib}% common bib file

\end{document}